\newif\ifuseboldmathops
\newif\ifuseittextabbrevs
	\newcommand{\ie}{{\it i.e.}}
	\newcommand{\ie}{i.e.~}
	\newcommand{\reals}{\mathbb{R}}
	\newcommand{\reals}{\mathbb{R}}
\newcommand{\argmax}{\mathop{\mathrm{argmax}}}
\newcommand{\argmin}{\mathop{\mathrm{argmin}}}
\newcommand{\truev}{\mathsf{true}}
\newcommand{\Always}{\Box \, }
\newcommand{\Eventually}{\Diamond \, }
\newcommand{\Next}{\bigcirc \, }
\newcommand{\until}{\mbox{$\, {\sf U}\,$}}
\newcommand{\sink}{\mathsf{sink} }
\newcommand{\calAP}{\mathcal{AP}}
\newcommand{\init}{{q_0}}
\newcommand{\last}{\mathsf{Last}}
\newcommand{\indicator}{\mathbf{1}}
\renewcommand{\vec}[1]{\mathbf{#1}}
\newcommand{\weight}{\mathbf{w}}
\theoremstyle{definition}
\newtheorem{definition}{Definition}
\newtheorem{example}{Example}
\newtheorem{lemma}{Lemma}
\newtheorem{theorem}{Theorem}
\newcommand{\calA}{\mathcal{A}}
\newcommand{\calP}{\mathcal{P}}
\newcommand{\dos}{\mathsf{d}}
\newcommand{\prioritizedAnd}{\&}
\acrodef{wdfa}[WDFA]{weighted deterministic finite automaton}
\acrodef{pqcltlf}[PQCLTL$_f$]{prioritized qualitative choice linear temporal logic on finite traces}
\acrodef{prefltl}[TPL]{temporal preference logic}
\acrodef{ltlf}[LTL$_f$]{linear temporal logic on finite traces}
\acrodef{qcl}[QCL]{qualitative choice logic}
\acrodef{pqcl}[PQCL]{prioritized qualitative choice logic}
\acrodef{pqcltl}[PQCTL]{prioritized qualitative choice temporal logic}
\acrodef{tlmdp}[TLMDP]{ labeled Markov decision process with a teminating state}
\acrodef{gpf}[GPF]{generalized preference formula}
\acrodef{cp}[CP]{ceteris paribus}
\acrodef{dfa}[DFA]{deterministic finite automaton}
\acrodef{mdp}[MDP]{Markov decision process}
\acrodef{ltl}[LTL]{linear temporal logic}
\acrodef{ldlf}[LDL$_f$]{linear dynamic logic on finite traces}
\acrodef{scltl}[syntactically co-safe LTL]{syntactically co-safe LTL}
\newcommand{\dist}{\mathcal{D}}
\renewcommand{\Pr}{\mathbf{Pr}}
\newcommand{\calM}{\mathcal{M}}
\newcommand{\lang}{\mathcal{L}}
\newcommand{\jf}[1]{\textcolor{blue}{JF: #1}}
\newcommand{\word}{\mathsf{Word}}
\newcommand{\trace}{\mathrm{trace}}
\newcommand{\orderedOR}{\overset{\rightarrow}{\times}}
\newcommand*{\probleminternal}[4]{
	\par
	\medskip
	\noindent\fbox{\parbox{0.98\columnwidth}{
			\textbf{#4: #1} \\[0.05in]
			\renewcommand{\tabcolsep}{2pt}
			\begin{tabularx}{\linewidth}{rX}
				\emph{Input:} & #2 \\
				\emph{Output:} & #3
			\end{tabularx}
		}}
		\par
		\medskip
		\par
	}
\newcommand*{\problembox}[3]{\probleminternal{#1}{#2}{#3}{Problem}}
\newcommand{\PPwPPoTLO}{{\rm PPwPPoTLO}\xspace}
\title{Probabilistic Planning with Prioritized Preferences over \\ Temporal Logic Objectives}
\author{
Lening Li$^1$
\and
Hazhar Rahmani$^2$\and
Jie Fu$^2$
\affiliations
$^1$Lening Li is with the Department of Robotics, Worcester Polytechnic Institute, Worcester, MA 01609, USA.\\
$^2$Hazhar Rahmani and Jie Fu are with the Department of Electrical and Computer Engineering, University of Florida, Gainesville, FL 32605, USA.

\emails
lli4@wpi.edu,
\{h.rahmani, fujie\}@ufl.edu
 \thanks{This material is based upon work supported by Air Force Office of Scientific Research under award number FA9550-21-1-0085 and in part by NSF under award number 2024802.}
}
\begin{document}
\nolinenumbers

\maketitle
\thispagestyle{empty}
\pagestyle{empty}

\begin{abstract}
This paper studies temporal planning in probabilistic environments, modeled as labeled Markov decision processes (MDPs), with user preferences over multiple temporal goals. 
Existing works reflect such preferences as a prioritized list of goals. This paper introduces a new specification language, termed prioritized qualitative choice linear temporal logic on finite traces, which augments linear temporal logic on finite traces with prioritized conjunction and ordered \replaced{disjunction}{disconjunction} from prioritized qualitative choice logic. This language allows \added{for} succinctly specifying temporal objectives with corresponding preferences accomplishing each temporal task. The finite traces that describe the system's behaviors are ranked based on their  dissatisfaction scores with respect to the formula. We propose a systematic translation from the new language to a weighted deterministic finite automaton. Utilizing this computational model, we formulate and solve a problem of computing an optimal policy that  minimizes the expected score of dissatisfaction given user preferences. We demonstrate the efficacy and applicability of the logic and the algorithm on several case studies with detailed analyses for each.
\end{abstract}

\section{Introduction}
%
%

In this work, we study preference-based planning given  a preference order over temporal goals, \ie, ordered goals specified in  temporal \replaced{logics}{logic}. 
Temporal logic  are expressive and rigorous languages for specifying complex tasks and mission objectives.  
Planning with temporal logic goals \cite{pnueli1981temporal} \replaced{has}{have} seen  studied for robotic systems ~\cite{kantaros2022perception,bradley2021learning,he2020bp,vasile2020reactive,yang2020distributed,wang2020hyperproperties,hekmatnejad2018optimal,he2015towards,li2021reactive} and other intelligent systems~\cite{kasenberg2020generating,de2015synthesis,camacho2017non,mallett2021progression,zhou2022multiple,zhao2022temporal}. 

Specifying preferences  over temporal goals gives the decision maker flexibility to revise the task and \replaced{achieve}{achieves} the most preferred outcomes when not all constraints/subtasks can be satisfied.
 Early works consider deterministic systems---modeled as finite, discrete systems or system with deterministic dynamics. 
Several works \cite{tumova2013least,tuumova2013minimum,wongpiromsarn2021,vasile2017minimum} proposed minimum violation planning methods that decide which low-priority constraints should be violated in a deterministic system. 
\cite{mehdipourSpecifyingUserPreferences2021} associate weights with Boolean and temporal operators in signal temporal logic to specify the importance of satisfying the sub-formula and priority in the timing of satisfaction. They develop algorithms to maximize the weighted satisfaction in deterministic dynamical systems. 
\cite{rahmani2019optimal,rahmani2020you} studied temporal planning given both hard and soft specifications of the goal, using \ac{ltl} and \ac{ldlf}.
%
%
\cite{cai2020receding} consider minimizing the deviations from infeasible \ac{ltl}  specifications while maximizing  the total rewards.
Recently, preference-based probabilistic planning with temporal logic constraints are studied.
\cite{li2020probabilistic} study  preference-based planning for \ac{mdp} subject to an ordered list of probabilistic temporal logic formula. The algorithm enumerates  tasks one by one in a prioritized order until a policy that satisfies the most preferred  task is found.
\cite{Lahijanian2016} studied syntactically co-safe LTL planning with infeasible specifications in environments modeled by \ac{mdp}s. 
They compute a policy that maximizes the probability of satisfying a revised formula and minimizes the cost of revision.
\cite{lacerda2015optimal} considered a similar problem where the aim is to synthesize a policy that, in decreasing order of priority, maximizes the probability of completing the task, maximizes the probability of progressing toward completion, and minimizes the expected cost.


Despite the existing work on probabilistic preference-based planning, the connection between preference   specification in AI and preferences over temporal goals is yet to be established.  
We propose 
a new language that  extends temporal logic with  fuzzy logic representation of preferences. Specifically, we consider \ac{qcl} proposed in \cite{brewkaQualitativeChoiceLogic2004a}  and its extension \ac{pqcl} \cite{benferhatRevisedQualitativeChoice2007,benferhatTwoAlternativesHandling2008}. 
%
\ac{qcl} extends propositional logic with  a new logical connective called \emph{ordered disjunction}, denoted $\orderedOR$. Formula $A\orderedOR B$ means if possible then $A$, but if $A$ is not possible then at least $B$. 
  \ac{pqcl} 
  introduced prioritized conjunction/disjunction to \ac{qcl} by allowing the user to express priorities in a user's preference. Combined, if $(A\orderedOR B) \& (C\orderedOR D)$ where $\&$ is the prioritized conjunction, then the preference of $A\orderedOR B$ is more important to be satisfied than the preference $C\orderedOR D$. 

The  preference language proposed herein, called \ac{pqcltlf}, integrates \ac{pqcl} with a subclass of \ac{ltl} over finite traces. In particular, we introduce \ac{ltl} formulas for atomic preference and employ \ac{pqcl} to represent a preference over the temporal goals. We assign a dissatisfaction score for each outcome (temporal sequence of states) in light of semantics for \ac{pqcl}. This scoring function enables us to formulate a preference-based probabilistic planning objective, that is, to minimize the expected dissatisfaction score in a stochastic system, modeled as a labeled \ac{mdp}.

However, this compact, logical representation of preferences alone is not sufficient for probabilistic planning, which generally requires a computational model. Based on the relation between \ac{ltl} and automata, we developed a procedure that translates a \ac{pqcltl} formula into a \emph{weighted deterministic finite automaton}. This weighted automaton ensures for each path that satisfies the preference to a degree $k$, the sum of weights of the corresponding induced run on the weighted automaton is exactly $k$. Augmenting the planning state space with the state set of the weighted automata using a product operation, we show that the most preferred policy in the stochastic system can be obtained by solving a product \ac{mdp}. The correctness of the solution hinges upon the definition of a reward function based on the weights on transitions in the weighted automaton. We formally prove that the reward-maximizing policy in the product \ac{mdp} minimizes the expected degree of dissatisfaction in the original \ac{mdp} given the \ac{pqcltl} formula. In experiments, we employ several examples of robotic motion planning example to demonstrate the efficacy and applicability of the method and provide a detailed comparison of preference-based planning and traditional probabilistic planning with a monolithic temporal logic formula.

\section{Preliminaries}

\textbf{Notations: }
The set of all probability distributions over a finite set $X$ is denoted $\dist(X)$.



We introduce necessary preliminaries and notations next.

\paragraph*{System model} 
We model the interaction between the planning agent (a robot) and its stochastic environment as a variant of \ac{mdp}.

\begin{definition}[Labeled Markov Decision Process with a terminating state]
A \ac{tlmdp} is a tuple $M = \langle S, A \coloneqq \cup_{s \in S}A_s, P, s_0, s_{\bot}, \calAP, L \rangle$ where $S$ is a finite set of states; $A$ is a finite set of actions, where for each state $s \in S$, $A_s$ is the set of available actions at $s$;
{$A$ includes a special \emph{terminating action} $a_\bot$ and for any $s\in S$, $a_\bot \in A_s$.} 
$P \colon S \times A \times S \to [0, 1]$ is the transition probability function, where for each $s, s' \in S$ and $a \in A$, $P (s' \mid s, a)$ is the probability that the \ac{mdp} transitions to $s'$ after taking action $a$ at $s$; $s_0 \in S$ is the initial state; $s_{\bot} \in S$ is the \emph{terminating state}, which is a unique $\emph{sink}$ state. {For any $s\in S$, $P(s, a_\bot, s_\bot)=1$. That is, if an agent selects the terminating action $a_\bot$, then a terminating state $s_\bot$ can be reached surely.}
The set $\calAP$ is a finite set of atomic propositions; and $L \colon S \to 2^{\calAP} \cup \{\ltimes\}$ is a labeling function that assigns to each state $s \in S \setminus \{s_{\bot}\}$, the set of atomic propositions $L(s) \subseteq \calAP$ that hold in $s$. Only the terminating state is labeled the empty string, i.e., $L(s_\bot) = \ltimes$.

%
\end{definition}


A finite \emph{run} in this MDP is a sequence $\varrho = s_0 a_0 s_1 a_1 \cdots s_{k-1} a_{k-1} s_k$, in which, $s_0$ is the initial state and for each $0 \leq i \leq k-1$, $P(s_{i+1} \mid s_i, a_i) > 0$.
The path associated with this run is the sequence $\rho = s_0 s_1 \cdots s_k \in S^\ast$ an the \emph{trace} of this path is defined \added{as} $\trace(\rho) = L(s_0)L(s_1)L(s_2)\cdots L(s_k) \in (2^{\calAP})^\ast$.
A path \replaced{which}{who} ends at $s_{\bot}$ is called terminating.

A finite-memory, randomized policy in the \ac{mdp} is a function $\pi: S^{\ast} \rightarrow \dist(A) $ that maps a  state sequence into a distribution over  actions. A Markovian, or memoryless, randomized policy in the \ac{mdp} is a function $\pi: S  \rightarrow  \dist(A)$ that maps the current state into a distribution over actions. We denote the set of all Markovian randomized policies as $\Pi$.


A finite-memory, randomized policy $\pi:S^\ast  \rightarrow \dist(A)$ induces a Markov chain $M^\pi= \langle S^\ast,P^\pi \rangle$ over $S^\ast$ as follows: For any $\rho \in S^\ast$, $s\in S$, 
\begin{equation}
\label{eq:trans_induced}
  P^\pi(\rho s|\rho) =\sum_{a\in A} P(s|\last(\rho),a ) \cdot \pi(\rho, a), 
\end{equation}
where $\last(\rho)$ is the last state given the sequence $\rho$.


The stochastic process induced by a Markov policy is a Markov chain $M^\pi = \langle S^\ast,P^\pi\rangle$, where $P^\pi$  can be obtained as a special case of \eqref{eq:trans_induced}.
The probability of a path $\rho$ in the Markov chain $M^\pi$ is denoted by $\Pr(\rho; M^\pi)$.

\paragraph*{Planning objectives} 
We are interested in probabilistic planning subject to a preference over a set of temporal goals, captured using \ac{ltlf} formulas.  
%

\begin{definition}[LTL$_f$ Syntax \cite{de2013linear}]
Given a finite set $\calAP$ of atomic propositions,  the syntax of \ac{ltlf} formulas is defined as follows:
\[    
\varphi \coloneqq   p \mid \neg \varphi \mid \varphi  \land \varphi  \mid \Next \varphi \mid \varphi  \until \varphi,
\]
where $p \in \calAP$, negation ($\neg$) and conjunction ($\land$) are standard Boolean operators, and ``Next'' ($\Next$) and ``Until'' ($\until$) are temporal operators.

\end{definition}


Informally, formula $\Next \varphi$ states that $\varphi$ holds at the next time instant, and $\varphi_1 \until \varphi_2$ means there is a future time instant at which $\varphi_2$ holds and for all time instants from the current time until that future time, $\varphi_1$ holds true.
The temporal operator ``Eventually'' ($\Eventually$) is defined using ``Until'' as $\Eventually \varphi := \truev \until \varphi$. The dual of this operator is ``Always'' ($\Always$), which is defined as $\Always \varphi := \neg \Eventually \neg \varphi$. 
Formula $\Eventually \varphi$ means there is some future time instant at which $\varphi$ holds, while $\Always \varphi$ is interpreted that $\varphi$ is true at all future time instants. The semantics of \ac{ltlf} is given as interpretations over  finite traces and can be found in \cite{de2013linear}.
%

%


The language of an \ac{ltlf} formula $\varphi$, denoted $\lang(\varphi)$, is the set of finite traces over the alphabet $2^\calAP$ that satisfy $\varphi$. For notational simplicity, let $\Sigma \coloneqq 2^\calAP$ in the following context. The set of all finite words over a finite alphabet $\Sigma$ is denoted by $\Sigma^\ast$. The language of \ac{ltlf} formula $\varphi$ can be represented by the set of words accepted by a \ac{dfa} $\calA_{\varphi} = \langle Q, \Sigma, \delta, q_0, F \rangle$, where $Q$ is a finite set of states;  $\Sigma = 2^{\calAP}$ is a finite set of symbols called the alphabet; 
$\delta: Q \times \Sigma \rightarrow Q$ is a transition function such that $\delta(q,\sigma)=q'$ is the state reached upon reading input $\sigma$ from state $q$; $q_0\in Q$ is  an initial  state; and $F\subseteq Q$ is a set of accepting/final states. 
A transition function is recursively extended in the general way: 
$\delta(q,\sigma w)=\delta( \delta(q,\sigma),w )$ for given $\sigma\in \Sigma$ and $w\in \Sigma^\ast$. A word $w$ is \emph{accepting} if and only if $\delta(q, w)\in F$. The \ac{dfa} $\calA_\varphi$ accepts the exact set of words satisfying $\varphi$ given the semantics of \ac{ltlf}.


\section{Preference Language: Integration of Prioritized Qualitative Choice Logic and Temporal Logic}

In this section, we present a new task specification language to describe a subset of preferences over temporal goals. \replaced{We call this language}{This language, called} \emph{\ac{pqcltlf}}, which combines
\ac{ltlf} with \ac{pqcl}~\cite{benferhatRevisedQualitativeChoice2007}---a propositional logic for representing ranked objectives.

\ac{pqcl} augments propositional logic with a connective $\orderedOR$, called \emph{ordered disjunction}: A formula $\phi_1\orderedOR \phi_2$ means that if possible then $\phi_1$, and if $\phi_1$ is not possible then $\phi_2$. The operator $\orderedOR$ is left-associative,
and therefore $\phi_1 \orderedOR \phi_2 \orderedOR \ldots \orderedOR \phi_n =  \phi_1 \orderedOR (\phi_2 \orderedOR (\ldots \orderedOR \phi_n)\ldots)$.
%
%
In addition to ordered disjunction, \ac{pqcl} introduces \emph{prioritized conjunction}: A formula $\phi_1 \prioritizedAnd \phi_2$ defines the lexicographical ordering between individual satisfaction   of $\phi_1$ and $\phi_2$. 
%
%
%

\begin{definition}[Prioritized Qualitative Choice Linear Temporal Logic on Finite Traces]
  Let $\Phi$ be a set of \ac{ltlf} formulas over a set of atomic propositions $\calAP$.  A \ac{pqcltlf} fragment over $\calAP$ (without negation) is defined by  
  \[
    \varphi \coloneqq \psi \mid \varphi \orderedOR \varphi \mid \varphi \prioritizedAnd \varphi,
  \]
  in which $\psi \in \Phi$.
\end{definition}
In comparison to \ac{pqcl}, we do not include negation operation and thereby exclude the prioritized disjunction.
Negation is only allowed in the construction of \ac{ltlf} formulas.  The reason of not including negation is mainly due to ambiguity: A negation of $\text{flight}\orderedOR \text{train}$ can mean the two options are indifferent, incomparable, or train is preferred to flights. 


Different words satisfy a \ac{pqcltlf} formula to different degrees, which motivates us to quantify the number of alternatives a \ac{pqcltlf} formula can be satisfied.
%


\begin{definition}[Optionality, extended from~\cite{benferhatRevisedQualitativeChoice2007}]
 \label{def:optionality}
 Given an 
 \ac{pqcltlf} formula $\varphi$, the optionality of $\varphi$, denoted $opt(\psi)$, is the number of ways $\varphi$ can be satisfied, and is computed recursively as follows:
 \begin{itemize}
    \item If $\varphi$ is an \ac{ltlf} formula, then $opt(\varphi)=1$; 
    \item If $\varphi = \varphi_1 \orderedOR \varphi_2$, then $opt(\varphi)=opt(\varphi_1) + opt(\varphi_2)$;
    \item If $\varphi = \varphi_1 \prioritizedAnd \varphi_2$, then $opt(\varphi)=opt(\varphi_1) \cdot opt(\varphi_2)$.
  \end{itemize}
\end{definition} 
Associated with this definition of optionality, for each word $w\in \Sigma^\ast$ and a \ac{pqcltlf} formula, the word satisfies the formula to a certain degree.

\begin{definition}[Satisfaction Degree, extended from~\cite{benferhatRevisedQualitativeChoice2007}]
\label{def:satis_deg}
  Let $\varphi$ be a \ac{pqcltlf} formula over $\calAP$ and   $w\in \Sigma^\ast$ (recall $\Sigma = 2^\calAP$) be a finite word.  We write $w \models_k \varphi$ for some positive integer $k>0$ to 
  denote that the satisfaction degree of $w$ with respect to $\varphi$ is $k$, and use $w \not \models \varphi$ to denote that $w$ does not satisfy $\varphi$. 
 
   The satisfaction degree of $w$ with respect to $\varphi$ is computed as follows:
  \begin{itemize}
    \item If $\varphi$ is an \ac{ltlf} formula, then  $w \models_1 \varphi$ if $w \in \lang(\varphi)$, and
    $w \not \models \varphi$ if $w \not\in \lang(\varphi)$.
    \item If $\varphi = \varphi_1\orderedOR \varphi_2$, then $w\models_k \varphi_1\orderedOR \varphi_2$ if either
    \begin{itemize}
        \item $w\models_k \varphi_1$; or
        \item $w\models_n \varphi_2$, $w \not \models  \varphi_1$, and $k= n+opt(\varphi_1)$. 
        \end{itemize}
    \item If $\varphi = \varphi_1\prioritizedAnd \varphi_2$, then $w\models_k \varphi_1\prioritizedAnd \varphi_2$ if
    \begin{itemize}
    \item there exist $i, j > 0$ such that  $w\models_i \varphi_1$,   $w\models_j \varphi_2$, and $k= opt(\varphi_2)\times  (i-1)+j$; 
    \item otherwise,  if  $w \not \models \varphi_1$ or $w\not \models \varphi_2$, then $w \not \models \varphi_1\prioritizedAnd \varphi_2$.
    \end{itemize}
  \end{itemize}
  \label{def:degree_of_satisfaction}

  \end{definition}
    %
    %
  The definition of  satisfaction degree induces a total order only on the set of all the words that satisfy the \ac{pqcltlf} formula, but it does not rank those words who does not satisfy the formula.
    %
    %
 For planning purposes, we introduce a metric whose range of values is circumscribed between $0$ and $1$.
    
  \begin{definition}[Dissatisfaction Score]
  The dissatisfaction score function is a function  $\dos: \Sigma^\ast \times \Phi\rightarrow (0,1]$ that assigns to each word $w \in \Sigma^*$ and \ac{pqcltlf} formula $\varphi \in \Phi$, a positive real value in $(0,1]$, called the \emph{dissatisfaction score of $w$ with respect to $\varphi$}, which is computed as follows:
  \begin{itemize}
      \item If $w\not\models \varphi$, then $\dos(w,\varphi) = 1$;
      \item If $w\models_k\varphi$ for $k >0$, then $\dos(w,\varphi) = \frac{k}{opt(\varphi)+1}$.
  \end{itemize}
  \label{def:dissatisfaction_score}
  \end{definition}
  The lower the score, the more satisfied is the word. Note that the score is always greater than 0. 
  In the following context, when the formula $\varphi$ is clear from the context, we simply write $\dos(w)$ for the   dissatisfaction score of $w$ w.r.t. $\varphi$.

  %

%


\paragraph*{Preference model generated from a \ac{pqcltlf} formula}
Each \ac{pqcltlf} formula $\varphi$ over a set of atomic propositions $\calAP$ induces a preference model $\succeq^\varphi$ over $\Sigma^*=(2^\calA)^*$ such that
for any two words $w,w'\in \Sigma^\ast$, $w$ is preferred to $w'$ with respect to $\varphi$, \ie,  $w\succeq^\varphi w'$, if and only if 
$\dos(w,\varphi)\le \dos(w', \varphi)$.

It is easy to prove the following property.
\begin{lemma} If $w \succeq^\varphi w'$, then
one of the following conditions holds:
\begin{inparaenum}
\item  $w\models_k \varphi $ and $\replaced{w'}{w} \not \models \varphi $; or 
\item $w\models_n \varphi $, $\replaced{w'}{w} \models_m \varphi$, and
$n \le m$.
\end{inparaenum} 
\end{lemma}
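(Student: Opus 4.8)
The plan is to unfold the definition of the preference model $\succeq^\varphi$ and reduce the statement to elementary arithmetic on dissatisfaction scores. By construction $w \succeq^\varphi w'$ holds exactly when $\dos(w,\varphi) \le \dos(w',\varphi)$, so it is enough to reason about these two numbers; and by Definition~\ref{def:dissatisfaction_score} each of them is either $1$ (when the word does not satisfy $\varphi$) or of the form $k/(opt(\varphi)+1)$ with $k$ the satisfaction degree. I would therefore first record a small auxiliary bound and then run a short case split on whether $w$ and $w'$ satisfy $\varphi$.

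The auxiliary fact is that $w \models_k \varphi$ implies $1 \le k \le opt(\varphi)$, which I would prove by structural induction on the \ac{pqcltlf} formula $\varphi$ using Definitions~\ref{def:optionality} and~\ref{def:degree_of_satisfaction}. The base case ($\varphi$ an \ac{ltlf} formula) has $k = 1 = opt(\varphi)$. For $\varphi = \varphi_1 \orderedOR \varphi_2$ the degree $k$ is either a degree of $w$ for $\varphi_1$, hence at most $opt(\varphi_1) \le opt(\varphi)$, or it equals $n + opt(\varphi_1)$ with $n \le opt(\varphi_2)$, hence at most $opt(\varphi_1) + opt(\varphi_2) = opt(\varphi)$. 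For $\varphi = \varphi_1 \prioritizedAnd \varphi_2$ we have $k = opt(\varphi_2)(i-1) + j$ with $i \le opt(\varphi_1)$ and $j \le opt(\varphi_2)$, so $k \le opt(\varphi_1)\,opt(\varphi_2) = opt(\varphi)$. Since $opt(\varphi) \ge 1$ always, the payoff is: when $w \models \varphi$ we get $\dos(w,\varphi) = k/(opt(\varphi)+1) \le opt(\varphi)/(opt(\varphi)+1) < 1$, whereas $\dos(w,\varphi) = 1$ precisely when $w \not\models \varphi$; moreover the map $k \mapsto k/(opt(\varphi)+1)$ is strictly increasing, so on satisfying words the score faithfully reflects the degree.

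With this in hand the conclusion follows by inspection. Assume $\dos(w,\varphi) \le \dos(w',\varphi)$ and suppose first that $w \models_n \varphi$ for some $n>0$. If $w' \not\models \varphi$ we land in the first listed alternative. Otherwise $w' \models_m \varphi$ for some $m>0$, and then $n/(opt(\varphi)+1) = \dos(w,\varphi) \le \dos(w',\varphi) = m/(opt(\varphi)+1)$ forces $n \le m$, which is the second listed alternative. The remaining possibility, $w \not\models \varphi$, forces $\dos(w',\varphi) \ge 1$ and hence $w' \not\models \varphi$ as well; this degenerate case is not covered by either listed alternative and is the one situation the statement tacitly rules out (for instance by restricting attention to $w$ with $w \models \varphi$).

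I do not foresee a genuine obstacle here: the only step that is more than bookkeeping is the structural induction bounding $k$ by $opt(\varphi)$, which is exactly what guarantees that any satisfying word is strictly preferred to every non-satisfying word; everything else is arithmetic with the common denominator $opt(\varphi)+1$.
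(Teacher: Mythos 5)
Your proof is correct. The paper itself offers no argument for this lemma (it is dispatched with ``it is easy to prove''), so there is no official route to compare against; your unfolding of $\succeq^\varphi$ into the arithmetic of $\dos$, together with the inductive bound $1 \le k \le opt(\varphi)$ that makes every satisfying word score strictly below $1$, is exactly the argument the authors are implicitly relying on, and the induction is carried out correctly in all three cases. Your observation about the degenerate case is also well taken and is a defect of the lemma statement rather than of your proof: when $w \not\models \varphi$ and $w' \not\models \varphi$ both scores equal $1$, so $w \succeq^\varphi w'$ holds while neither listed alternative does; the lemma should either add this third disjunct or be read as asserting the converse direction (that each listed condition is sufficient for $w \succeq^\varphi w'$), under which reading your case analysis still supplies the proof.
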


The preference model over $\Sigma^\ast$ directly translates to a preference model over $S^\ast$---the set of finite paths in a labeled \ac{mdp} such that path $\rho \in S^\ast $ is preferred to $\rho' \in S^\ast$ if and only if $\dos(L(\rho), \varphi)\le\dos(L(\rho'), \varphi) $.
Thus, given a policy in a labeled \ac{mdp}, we introduce the following measure to evaluate how preferred a policy is with respect to a \ac{pqcltlf} formula. 

\begin{definition}[Expected  Dissatisfaction Score]
Let $\pi$ be a finite-memory, randomized policy for a given MDP,  $M^\pi= \langle S^\ast,P^\pi \rangle$ be its induced Markov chain, and $\varphi$ be a \ac{pqcltlf} formula. The \emph{expected dissatisfaction score} of $\pi$ with respect to $\varphi$, denoted by $\dos(\pi, \varphi)$, is defined
  \begin{equation}
  \label{eq:policy_sat_degree}
     \dos(\pi,\varphi)
     = \sum_{\rho \in S^\ast} \Pr(\rho;M^\pi)\cdot \dos(L(\rho),\varphi).
  \end{equation}
\end{definition}

 

We now formally state the probabilistic planning problem:

\problembox{Probabilistic Planning with Prioritized Preferences over Temporal Logic Objectives (\PPwPPoTLO)}
{A   labeled \ac{mdp} $M = \langle S, A \coloneqq \cup_{s \in S}A_s, P, s_0, s_{\bot}, \calAP, L \rangle$ and a \ac{pqcltlf} formula $\varphi$.}
{
A policy $\pi \colon S^\ast \rightarrow\dist(A)$ that minimizes the expected dissatisfaction score of $\varphi$.
\label{prob}
}

\section{Optimal planning for PQCLTL$_f$ formulas}
We now present a planning algorithm to solve the \PPwPPoTLO problem. Our approach consists of two steps: In the first step, we construct an automata-theoretic model for \ac{pqcltlf} formula. In the second step, we show that the optimal policy that minimizes the expected dissatisfaction score of the given formula can be computed by solving a reward-maximizing \ac{mdp} with  augmented states.

\subsection{Automata-theoretic modeling of PQCLTL$_f$ formulas}
We focus on constructing a computational model for a given \ac{pqcltl} formula $\varphi$. The choice of such a computational model for representing the subclass of \ac{pqcltlf} formulas is a \emph{weighted deterministic finite automaton}. 
\begin{definition}[Weighted Deterministic Finite  Automaton~\cite{droste2009weighted}]
  A \emph{weighted deterministic finite automaton} is a tuple $\calA = \langle Q,\Sigma, \delta, q_0, \weight \rangle$,
  where $Q$ is a finite set of states; $\Sigma \cup \{\ltimes\}$ is a finite set of symbols (alphabet); and $\ltimes$ is a unique symbol representing the end of a string \footnote{In general, one can include $\rtimes$ as the beginning of a finite string and $\ltimes$ as the ending of a finite string. The beginning symbol $\rtimes$  is omitted as it is clear from the context.}; $\delta: Q\times (\Sigma \cup \{\ltimes\}) \rightarrow Q$ is a deterministic transition function; $q_0$ is the initial state; and  $\weight: Q\times (\Sigma \cup \{\ltimes\}) \times Q\rightarrow \reals$ is a weight function that assigns each transition $(q,\sigma, q')$ to a real value, called the weight of this transition.
\end{definition}
Consider a finite word   $w=\sigma_0 \sigma_1 \ldots \sigma_{n-1}\ltimes$,    let $w[i]$ be the $i$-th symbol of this word.  The run $\rho$ generated by word $w$ is $\rho \coloneqq q_0 \sigma_0 q_1 \ldots \sigma_{n-1} q_n $ that satisfies $q_{i+1}=\delta(q_i, w[i])$, for $i=0,\ldots, n-1$.
We write $\word(\rho)=w$ to denote the word associated with the run $\rho$. 
%
The total weight is $\weight(\rho)= \sum_{i=0}^{n-1} \weight(q_i,w[i],q_{i+1})$.

First, we show how to construct the \ac{wdfa} for an \ac{ltlf} $ \varphi$.

\begin{definition}[\ac{wdfa} for an \ac{ltlf} Formula $\varphi$]
\label{def:wdfa_ltlf}
Let $\calA_\varphi = \langle Q,\Sigma, \delta, q_0, F \rangle $ be a \ac{dfa} encoding $\varphi$. A \ac{wdfa} for encoding $\varphi$ is constructed from $\calA_\varphi$ as a tuple 
\[
  \calA = \langle Q \cup \{\sink\},\Sigma \cup \{
  \ltimes\}, \delta', q_0, \weight \rangle
\]
in which for each $q \in Q \cup \{\sink\}$ and $\sigma \in \Sigma \cup \{ \ltimes \}$,
\begin{equation}
    \delta'(q, \sigma) = 
    \begin{cases}
    \delta(q, \sigma) & \text{if } q \neq \sink \text{ and } \sigma \neq \ltimes \\
    \sink  & \text{otherwise,}
\end{cases}
\end{equation}
and for each $q, q' \in Q \cup \{\sink \}$ and $\sigma \in \Sigma \cup \{ \ltimes \}$, 
\begin{equation}
    \weight(q, \sigma, q') = 
    \begin{cases}
    1 & \text{if } q \in F  \text{ and } \sigma = \ltimes \text{ and } q' = \sink \\
    0  & \text{otherwise.}
\end{cases}
\end{equation}
\end{definition}

Intuitively, the \ac{wdfa} $\calA$ extends the DFA $\calA_\varphi$ with a sink state $\sink$. For any  state $s\in S$ of the original \ac{dfa} $\calA_\varphi$, a transition to $\sink$ is made with an input symbol $\ltimes$. A weight one is received only if the transition is from an accepting state to the sink state upon reading the ending symbol $\ltimes$. 
%
\begin{lemma}
  Given a \ac{wdfa} $\calA$ for an \ac{ltlf} formula $\varphi$ and a finite run $\rho =q_0 \sigma_0 q_1 \ldots \sigma_{n-1}q_n$, if $\weight(\rho)=1$ then $\word(\rho)\models_{1} \varphi$.
  \label{lma:ltlf}
\end{lemma}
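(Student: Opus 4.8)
The plan is to read the conclusion directly off the rigid shape of the weight function in Definition~\ref{def:wdfa_ltlf}. Write $w = \word(\rho) = w[0]\,w[1]\cdots w[n-1]$, so that $\rho = q_0\,w[0]\,q_1\cdots w[n-1]\,q_n$ with $q_{i+1} = \delta'(q_i,w[i])$ and $\weight(\rho) = \sum_{i=0}^{n-1}\weight(q_i,w[i],q_{i+1})$. Since $\ltimes$ is the end-of-string marker, it occurs in $w$ exactly once, namely $w[n-1] = \ltimes$, and $w[i]\in\Sigma$ for $i<n-1$. Each summand $\weight(q_i,w[i],q_{i+1})$ is $0$ or $1$, and it equals $1$ only when $q_i\in F$, $w[i] = \ltimes$, and $q_{i+1} = \sink$; as $w[i] = \ltimes$ only for $i = n-1$, the only summand that can be nonzero is the last one. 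Hence $\weight(\rho) = 1$ forces $\weight(q_{n-1},\ltimes,q_n) = 1$, which gives $q_{n-1}\in F$ (and $q_n = \sink$).

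Next I would show that the $\Sigma$-prefix $u \coloneqq w[0]\cdots w[n-2]$ is accepted by the \ac{dfa} $\calA_\varphi$. For every $i<n-1$ we have $w[i]\neq\ltimes$, and also $q_i\neq\sink$: the only transitions entering $\sink$ are the $\ltimes$-labelled ones (or self-loops at $\sink$), so $\sink$ is first visited only at step $n$. Thus for each $i<n-1$ the first case in the definition of $\delta'$ applies, i.e.\ $\delta'(q_i,w[i]) = \delta(q_i,w[i])$, so $q_0\,w[0]\,q_1\cdots w[n-2]\,q_{n-1}$ is exactly the run of $\calA_\varphi$ on $u$ and it ends in the accepting state $q_{n-1}\in F$. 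Therefore $u\in\lang(\varphi)$, which by the \ac{ltlf} clause of Definition~\ref{def:satis_deg} is precisely the statement $\word(\rho)\models_1\varphi$.

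The whole argument is just an unwinding of the two case splits in Definition~\ref{def:wdfa_ltlf}; the one point that needs a word of justification is that the run cannot enter $\sink$ before its last step, which follows from the facts that $\sink$ is reachable only via a $\ltimes$-transition and that $\ltimes$ appears in $w$ only as its terminal symbol. No further machinery (nor any property of \ac{ltlf} semantics beyond $\lang(\varphi)$ being the language accepted by $\calA_\varphi$) is needed, so I expect the proof to be short.
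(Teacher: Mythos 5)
Your proof is correct and follows essentially the same route as the paper's: read off from the weight function that $\weight(\rho)=1$ forces $q_{n-1}\in F$, $\sigma_{n-1}=\ltimes$, $q_n=\sink$, and then invoke acceptance by the underlying \ac{dfa}. You merely spell out two details the paper leaves implicit (that only the final $\ltimes$-transition can contribute weight, and that the run stays out of $\sink$ until the last step), which is a welcome but not substantively different elaboration.
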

The proofs of   Lemmas \ref{lma:ltlf}, \ref{lma:orderedDisjunct} \ref{lma:prioritizedConj}, and \ref{lma:option} can be found in Appendix~\ref{app:proofs}.


Next, we define the construction process of \ac{wdfa}s for ordered disjunction and prioritized conjunction of \ac{pqcltlf} formulas. 

\begin{definition}[\ac{wdfa} for Ordered Disjunction of \ac{pqcltlf} Formulas]
\label{def:wdfa_orderOR}
Let $\calA_i = \langle Q_i\cup \{\sink_i\}, \Sigma\cup \{\ltimes\}  , \delta_i, \init_i, \weight_i \rangle$ for $i=1,2$ be two  \ac{wdfa}'s that respectively encode two \ac{pqcltlf} formulas $\varphi_1,\varphi_2$.
One can constrcut from them, a \ac{wdfa} for $\varphi_1\orderedOR \varphi_2$ as a tuple $\calA  = \langle Q_1\times Q_2 \cup \{ \sink \}, \Sigma \cup \{\ltimes\} , \delta,( \init_1, \init_2),  \weight \rangle$, in which, the transition function is defined as,  for any $(q_1,q_2)\in Q_1\times Q_2$ and $\sigma\in \Sigma \cup \{ \ltimes \}$, 
    \begin{align*}
        \delta((q_1, q_2), \sigma) = 
        \begin{cases}
            (\delta_1(q_1, \sigma), \delta_2(q_2, \replaced{\sigma}{q})) & \text{if }\sigma \neq \ltimes \\
             \sink  & \text{otherwise,}
        \end{cases}  
    \end{align*}
    and the weight function is defined as, 
    \begin{itemize}
  \item For any $(q_1, q_2) \in Q_1 \times Q_2$, input $\sigma\in \Sigma$
            \[
            \weight((q_1, q_2), \sigma, (\delta_1(q_1, \sigma), \delta_2(q_2, \sigma))) = 0
            \]
            \item For any $(q_1, q_2) \in Q_1 \times Q_2$, input $\ltimes$, 
            \begin{multline*}
        \weight((q_1, q_2), \ltimes, \sink) = \\
        \begin{cases}
            \weight_1(q_1,\ltimes, \sink) & \mbox{if }\weight_1(q_1,\ltimes,\sink)>0 \\
            \weight_2(q_2,\ltimes, \sink)+ opt(\varphi_1) &  \text{if $\weight_1(q_1,\ltimes,\sink)=0$ }\\
            & \text{and $\weight_2(q_2,\ltimes, \sink)>0$} \\
             0  & \text{otherwise,}
        \end{cases}    
    \end{multline*}
 \end{itemize}
\end{definition}

\begin{lemma}
  \label{lma:orderedDisjunct}
  Given a \ac{wdfa} $\calA$ for  $\varphi =  \varphi_1\orderedOR \varphi_2$ and a finite run $\rho =\vec{q}_0 \sigma_0 \vec{q}_1 \ldots  \vec{q}_{n-1} \sigma_{n-1} \vec{q}_n$, if $\weight(\rho)=k$ for a $k > 0$, then $\word(\rho)\models_{k} \varphi$, else $\word(\rho) \not \models \varphi$.
\end{lemma}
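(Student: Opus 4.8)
The plan is to induct on the structure of $\varphi$, using Lemma~\ref{lma:ltlf} as a base case wherever an \ac{ltlf} subformula is reached, and to carry out the inductive step directly on the product construction of Definition~\ref{def:wdfa_orderOR}. Fix a finite word $w = \sigma_0\sigma_1\cdots\sigma_{n-1}\ltimes$ and let $\rho = \vec{q}_0\sigma_0\vec{q}_1\cdots\vec{q}_{n-1}\sigma_{n-1}\vec{q}_n$ be the run it generates in $\calA$. By the form of $\delta$ in Definition~\ref{def:wdfa_orderOR}, for every index $i \le n-1$ we have $\vec{q}_i = (q_1^i, q_2^i)$ with $q_1^{i+1} = \delta_1(q_1^i,\sigma_i)$ and $q_2^{i+1} = \delta_2(q_2^i,\sigma_i)$, and the last transition is the unique $\ltimes$-transition into $\sink$. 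Hence $\rho$ projects onto runs $\rho_1$ of $\calA_1$ on $w$ and $\rho_2$ of $\calA_2$ on $w$, where $\rho_j$ ends with its own $\ltimes$-transition into $\sink_j$. Since all non-$\ltimes$ transitions of $\calA$ carry weight $0$, we get $\weight(\rho) = \weight((q_1^{n-1},q_2^{n-1}),\ltimes,\sink)$, and similarly $\weight(\rho_j) = \weight_j(q_j^{n-1},\ltimes,\sink_j)$ for $j = 1,2$.

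Now I would do a three-way case split according to the definition of the $\ltimes$-weight in $\calA$. First, if $\weight_1(q_1^{n-1},\ltimes,\sink_1) > 0$, then $\weight(\rho) = \weight(\rho_1) =: k$; by the induction hypothesis applied to $\calA_1$ and $\rho_1$ we get $w = \word(\rho_1) \models_k \varphi_1$, and the first clause of the $\orderedOR$ case of Definition~\ref{def:satis_deg} gives $w \models_k \varphi_1\orderedOR\varphi_2$. Second, if $\weight_1(q_1^{n-1},\ltimes,\sink_1) = 0$ but $\weight_2(q_2^{n-1},\ltimes,\sink_2) = n' > 0$, then $\weight(\rho) = n' + opt(\varphi_1) =: k$; the induction hypothesis on $\calA_1$, $\rho_1$ (with total weight $0$) yields $w \not\models \varphi_1$, and the induction hypothesis on $\calA_2$, $\rho_2$ yields $w \models_{n'} \varphi_2$, so the second clause of the $\orderedOR$ case gives $w \models_k \varphi$ with exactly $k = n' + opt(\varphi_1)$. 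Third, if both $\ltimes$-weights are $0$, then $\weight(\rho) = 0$, the induction hypotheses give $w \not\models \varphi_1$ and $w \not\models \varphi_2$, and the semantics then give $w \not\models \varphi_1\orderedOR\varphi_2$, matching the "else" branch of the claim.

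The one place that needs a little care — and which I expect to be the main obstacle — is making the induction hypothesis say exactly the right thing. The statement of Lemma~\ref{lma:orderedDisjunct} as written is a one-directional implication ("if $\weight(\rho)=k$ then $w\models_k\varphi$"), but the inductive step above genuinely needs the contrapositive/biconditional form: I must know that $\weight(\rho_1)=0$ \emph{implies} $w\not\models\varphi_1$, not merely the converse. So I would either (i) phrase the induction hypothesis as the strengthened biconditional "$\weight(\rho)=k>0 \iff w\models_k\varphi$, and $\weight(\rho)=0 \iff w\not\models\varphi$", proving this stronger statement for all three constructions simultaneously, or (ii) observe that since $\weight(\rho)$ is a single well-defined number (the automaton is deterministic) and the $\models$ relation assigns each word a single degree or non-satisfaction, the forward implication plus totality already forces the biconditional. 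I would also need the auxiliary fact (presumably Lemma~\ref{lma:option}, about $opt$) that the weight added in the second case, $opt(\varphi_1)$, is computed by the construction to equal the optionality as in Definition~\ref{def:optionality}; I will cite that rather than reprove it. Finally, I should note the degenerate case where $w$ is the empty trace ($n=0$), which is handled identically since then $\vec{q}_0 = (\init_1,\init_2)$ and the sole transition is the $\ltimes$-transition.
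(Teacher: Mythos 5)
Your proposal is correct and follows essentially the same route as the paper's proof: a three-way case split on the $\ltimes$-weights $\weight_1(q_1,\ltimes,\sink)$ and $\weight_2(q_2,\ltimes,\sink)$ at the penultimate product state, mapped onto the two clauses of the $\orderedOR$ case of Definition~\ref{def:satis_deg}. The paper's version is much terser and leaves the structural induction and the biconditional form of the induction hypothesis implicit; your explicit treatment of the projection onto component runs and of the fact that weight $0$ must entail non-satisfaction (needed for the second and third cases, and only immediate rather than stated for the \ac{ltlf} base case of Lemma~\ref{lma:ltlf}) is a genuine tightening, not a different argument.
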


\begin{definition}[\ac{wdfa} for Prioritized Conjunction of \ac{pqcltlf} Formulas]
\label{def:wdfa_prioritizedAnd}
Let $\calA_i = \langle Q_i\cup \{\sink_i\}, \Sigma\cup \{\ltimes\}  , \delta_i, \init_i, \weight_i \rangle$ for $i=1,2$ be two  \ac{wdfa}'s that respectively encode two \ac{pqcltlf} formulas $\varphi_1,\varphi_2$.
One can construct from them, a 
 \ac{wdfa}
for $\varphi_1 \prioritizedAnd \varphi_2$ as a tuple
$\calA  = \langle Q_1\times Q_2 \cup \{ \sink \}, \Sigma \cup \{\ltimes\} , \delta,( \init_1, \init_2),  \weight \rangle$, in which, the transition function is defined as, for any $(q_1,q_2)\in Q_1\times Q_2$ and $\sigma \in \Sigma \cup \{ \ltimes \}$,
\begin{align*}
        \delta((q_1, q_2), \sigma) = 
        \begin{cases}
            (\delta_1(q_1, \sigma), \delta_2(q_2, \replaced{\sigma}{q})) & \text{if } \sigma \neq \ltimes ,\\
             \sink  & \text{otherwise.}
        \end{cases}    
    \end{align*}
and the weight function is defined as, 
\begin{itemize}
  \item For any $(q_1, q_2) \in Q_1 \times Q_2$, for $\sigma \in \Sigma$, 
            \[
            \weight((q_1, q_2), \sigma, (\delta_1(q_1, \sigma), \delta_2(q_2, \sigma))) = 0\]
            \item   For any $(q_1, q_2) \in Q_1 \times Q_2$, for input $\ltimes$, if $\weight_i(q_i,\ltimes,\sink) >0$ for both $i=1,2$, then 
          \begin{multline*}
                \weight((q_1, q_2), \ltimes, \sink)  = \weight_2(q_2,\ltimes,\sink)        \\ + opt(\varphi_2) \cdot (\weight_1(q_1,\ltimes,\sink)-1),
          \end{multline*}   else
 $         \weight((q_1, q_2), \ltimes, \sink) =0$.
 \end{itemize}
\end{definition}


\begin{lemma}
  \label{lma:prioritizedConj}
  Given a \ac{wdfa} $\calA$ for  $\varphi =  \varphi_1\prioritizedAnd \varphi_2$, and a finite run $\rho =\vec{q}_0 \sigma_0 \vec{q}_1 \ldots  \vec{q}_{n-1} \sigma_{n-1} \vec{q}_n$, if $\weight(\rho)=k$ for a $k > 0$, then $\word(\rho)\models_{k} \varphi$, else $\word(\rho) \not \models \varphi$.
\end{lemma}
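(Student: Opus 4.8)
The plan is to prove, by structural induction on the PQCLTL$_f$ formula, the stronger two-directional statement that for every PQCLTL$_f$ formula $\psi$, its \ac{wdfa} $\calA_\psi$, and every finite run $\rho$ of $\calA_\psi$: if $\weight(\rho)=k>0$ then $\word(\rho)\models_k\psi$, and if $\weight(\rho)=0$ then $\word(\rho)\not\models\psi$. The base case $\psi\in\Phi$ is Lemma~\ref{lma:ltlf} together with the observation that an \ac{ltlf} formula has optionality $1$, so the only possible run weights are $0$ and $1$; the ordered-disjunction case is Lemma~\ref{lma:orderedDisjunct}; and the present lemma is the prioritized-conjunction step, for which I assume the statement already holds for the strict subformulas $\varphi_1$ and $\varphi_2$.

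First I would decompose the run. Write $w=\word(\rho)=\sigma_0\cdots\sigma_{n-1}$ with $\sigma_0,\dots,\sigma_{n-2}\in\Sigma$ and $\sigma_{n-1}=\ltimes$, so that $\vec q_n=\sink$ and, by the transition rule of Definition~\ref{def:wdfa_prioritizedAnd}, each earlier state has the product form $\vec q_i=(q_i^1,q_i^2)$ with $q_{i+1}^\ell=\delta_\ell(q_i^\ell,\sigma_i)$ for $\ell\in\{1,2\}$ and $i<n-1$. Since the automata are deterministic, the projections $\rho_1 := q_0^1\sigma_0\cdots q_{n-1}^1\,\ltimes\,\sink_1$ and $\rho_2 := q_0^2\sigma_0\cdots q_{n-1}^2\,\ltimes\,\sink_2$ are precisely the runs of $\calA_1$ and $\calA_2$ on $w$, so $\word(\rho_1)=\word(\rho_2)=w$. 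Next I would observe that since every $\Sigma$-transition carries weight $0$ in all three automata, the total weights collapse to the weights of the single $\ltimes$-transitions: $\weight(\rho)=\weight(\vec q_{n-1},\ltimes,\sink)$ and $\weight(\rho_\ell)=\weight_\ell(q_{n-1}^\ell,\ltimes,\sink_\ell)$.

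Then I would split on $\weight(\rho)$. If $\weight(\rho)=k>0$, the weight clause of Definition~\ref{def:wdfa_prioritizedAnd} forces $i:=\weight_1(q_{n-1}^1,\ltimes,\sink_1)>0$ and $j:=\weight_2(q_{n-1}^2,\ltimes,\sink_2)>0$, with $k=j+opt(\varphi_2)\cdot(i-1)$. By the induction hypothesis applied to $\rho_1$ and $\rho_2$, $w\models_i\varphi_1$ and $w\models_j\varphi_2$; since $k=opt(\varphi_2)\cdot(i-1)+j$, the prioritized-conjunction clause of Definition~\ref{def:degree_of_satisfaction} gives $w\models_k\varphi_1\prioritizedAnd\varphi_2$. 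If instead $\weight(\rho)=0$, then by the same weight clause $\weight_1(q_{n-1}^1,\ltimes,\sink_1)=0$ or $\weight_2(q_{n-1}^2,\ltimes,\sink_2)=0$, so by the induction hypothesis $w\not\models\varphi_1$ or $w\not\models\varphi_2$, and the ``otherwise'' part of the prioritized-conjunction clause yields $w\not\models\varphi_1\prioritizedAnd\varphi_2$. This closes the induction step.

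I expect the main obstacle to be organizational rather than computational: one must state the induction so that both directions (positive weight $\Rightarrow$ satisfied to that degree, zero weight $\Rightarrow$ not satisfied) are carried simultaneously through Lemmas~\ref{lma:ltlf}, \ref{lma:orderedDisjunct}, and \ref{lma:prioritizedConj}, because the $\prioritizedAnd$ step invokes both directions for $\varphi_1$ and $\varphi_2$. Secondary care is needed with the end-marker bookkeeping — checking that a run on a well-formed word $u\ltimes$ with $u\in\Sigma^\ast$ stays in the product region while reading $u$ and makes exactly one $\ltimes$-transition into $\sink$, so that $\weight(\rho)$ really reduces to the weight of that transition — and with the implicit well-definedness of the satisfaction degree (uniqueness of the $i,j$ above), which is what makes $k=opt(\varphi_2)(i-1)+j$ unambiguous; consistency of the resulting range with $opt(\varphi_1\prioritizedAnd\varphi_2)=opt(\varphi_1)\cdot opt(\varphi_2)$ from Definition~\ref{def:optionality} can be noted here but is the content of Lemma~\ref{lma:option}.
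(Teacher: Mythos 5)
Your proposal is correct and follows essentially the same route as the paper: the paper omits this proof, stating only that it is ``by construction and similar to'' the ordered-disjunction case, whose proof is exactly your case analysis on the component weights $\weight_1(q_{n-1}^1,\ltimes,\sink_1)$ and $\weight_2(q_{n-1}^2,\ltimes,\sink_2)$ at the penultimate product state. Your version merely makes explicit what the paper leaves implicit --- the structural induction, the projection of the product run onto runs of $\calA_1$ and $\calA_2$, and the need to carry both directions (positive weight and zero weight) through the induction hypothesis --- which is a useful tightening but not a different argument.
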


Given the above construction methods of \ac{wdfa}s for \ac{pqcltlf} formulas, the \ac{wdfa} for a more complex \ac{pqcltlf} formulas can be constructed recursively.

\begin{lemma}
\label{lma:option}
  Given a \ac{pqcltlf} formula $\varphi$ for which $\lang(\varphi)\ne \emptyset$ and the constructed \ac{wdfa} $\calA$, the \replaced{optionality}{optionally} of $\varphi$ is the maximal weight of all transitions in $\calA$. That is 
  \[
  opt(\varphi) = \max\{\weight(q,a,q')\mid \delta(q,a,q') \text{ is defined.}\}
  \]
\end{lemma}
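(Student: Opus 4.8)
The plan is a structural induction on the \ac{pqcltlf} formula $\varphi$, following the recursive \ac{wdfa} constructions of Definitions~\ref{def:wdfa_ltlf}, \ref{def:wdfa_orderOR}, and~\ref{def:wdfa_prioritizedAnd}. Before the induction I would isolate a bookkeeping invariant preserved by all three constructions: every transition on a letter $\sigma \in \Sigma$ has weight $0$, so a positive weight can occur only on a terminal transition of the form $(q,\ltimes,\sink)$. This is immediate from the three weight definitions and is trivially inherited through the recursion, and it reduces the right-hand side to $\max_q \weight(q,\ltimes,\sink)$; the rest of the argument only manipulates terminal weights.

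For the base case $\varphi=\psi$ with $\psi$ an \ac{ltlf} formula and $\lang(\psi)\neq\emptyset$, Definition~\ref{def:wdfa_ltlf} makes every terminal weight $0$ or $1$, and $\lang(\psi)\neq\emptyset$ forces an accepting state in $\calA_\psi$, hence a terminal transition of weight $1$; so the maximum is $1 = opt(\psi)$. For the inductive step I would apply the hypothesis $opt(\varphi_i)=\max_{q_i}\weight_i(q_i,\ltimes,\sink)$ to each subformula. The ``$\le$'' inequality for $\varphi_1\orderedOR\varphi_2$ and for $\varphi_1\prioritizedAnd\varphi_2$ follows by substituting $\weight_i(q_i,\ltimes,\sink)\le opt(\varphi_i)$ into the case formulas of the two weight definitions and simplifying with $opt(\varphi_1\orderedOR\varphi_2)=opt(\varphi_1)+opt(\varphi_2)$ and $opt(\varphi_1\prioritizedAnd\varphi_2)=opt(\varphi_1)\cdot opt(\varphi_2)$ from Definition~\ref{def:optionality}. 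The ``$\ge$'' inequality for $\prioritizedAnd$ is just as direct: picking states $q_i^{\star}$ with $\weight_i(q_i^{\star},\ltimes,\sink)=opt(\varphi_i)$ (maxima over finite sets), the terminal transition out of $(q_1^{\star},q_2^{\star})$ has weight $opt(\varphi_2)+opt(\varphi_2)\cdot(opt(\varphi_1)-1)=opt(\varphi_1)\cdot opt(\varphi_2)$.

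The step I expect to be the real obstacle is the ``$\ge$'' inequality for $\varphi_1\orderedOR\varphi_2$. Since $\weight_1(q_1,\ltimes,\sink)\le opt(\varphi_1)<opt(\varphi)$, the first branch of the $\orderedOR$ weight rule can never reach $opt(\varphi)$, so one must instead exhibit a state $q_1\in Q_1$ with $\weight_1(q_1,\ltimes,\sink)=0$ and pair it with a state $q_2^{\star}$ attaining $opt(\varphi_2)$, which yields terminal weight $opt(\varphi_2)+opt(\varphi_1)=opt(\varphi)$. Consequently the induction must be strengthened to also carry the claim that every constructed \ac{wdfa} has some terminal transition of weight exactly $0$ (equivalently $\min_q\weight(q,\ltimes,\sink)=0$). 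At the leaves this is the requirement that $\calA_\psi$ has a non-accepting state, i.e., $\lang(\psi)\neq\Sigma^{\ast}$ — a harmless non-triviality assumption on the atomic goals (a valid goal carries no information anyway), and together with their satisfiability it also ensures the plain statement applies to every subformula. Granting this, the zero-weight property is easily seen to survive both product constructions, and the induction closes. Getting the bookkeeping for these degenerate subformulas (empty or universal language) right is really the only delicate part; everything else is routine calculation with the definitions.
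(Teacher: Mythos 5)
Your overall strategy---structural induction over the recursive \ac{wdfa} constructions, first observing that all $\Sigma$-transitions carry weight $0$ so that only the terminal weights $\weight(q,\ltimes,\sink)$ matter---is exactly the route the paper takes. The difference is in rigor, and it matters. The paper's proof of the ordered-disjunction case simply asserts that the maximal weight is $\max\{\weight_2(q_2,\ltimes,\sink)+opt(\varphi_1)\}=opt(\varphi_2)+opt(\varphi_1)$, silently assuming that the second branch of the weight rule is actually realizable, i.e., that some $q_1\in Q_1$ has $\weight_1(q_1,\ltimes,\sink)=0$. You correctly single this out as the one genuinely delicate step: if $\varphi_1$ is valid (take $\varphi=\truev\orderedOR\Eventually a$), every state of $\calA_1$ has terminal weight $1$, the second branch never fires, the maximal weight is $1$, yet $opt(\varphi)=2$---so the lemma as stated needs a non-triviality assumption on the atomic \ac{ltlf} goals, and your strengthened induction invariant (every constructed \ac{wdfa} also has a terminal transition of weight exactly $0$, a property preserved by both product constructions) is precisely the repair the paper's argument is missing. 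Your $\le$/$\ge$ split and the closed-form maximization for $\prioritizedAnd$ coincide with the paper's calculation (which, incidentally, contains a harmless typo, writing $opt(\varphi_2)(opt(\varphi_1)-1)+opt(\varphi_1)$ where the last summand should be $opt(\varphi_2)$). In short: same decomposition, but your version is the one that actually closes; the gap you predicted is present in the paper's own proof.
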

An example to illustrate the construction of \ac{wdfa}s is given in the Appendix ~\ref{app:example}.
 \section{Probabilistic Planning to Minimizing Dissatisfaction Score}
 
In this section, we show how to leverage the \ac{wdfa} for solving Problem~\PPwPPoTLO. Similar to probabilistic planning with linear temporal logic constraints, a product operation between the labeled \ac{mdp} and the \ac{wdfa} allows us to keep track of temporal objectives.

\begin{definition}[The product between the labeled \ac{mdp} and a   \ac{wdfa}]
\label{def:prodMDP}
  The product of a given
   \ac{wdfa} $\calA = \langle Q \cup \{\sink\},\Sigma \cup   \{\ltimes\}, \delta, \init, \weight \rangle$ and a terminating labeled  \ac{mdp} $M = \langle S, A \coloneqq \cup_{s \in S}A_s, P, s_0, s_{\bot}, \calAP, L \rangle$ is an MDP
  \[
    \calM= M\otimes \calA =( V 
    , A := \bigcup_{v \in V} A_v , \calP, v_0 , R)
  \]
  in which
  \begin{itemize}
    \item $V = S\times Q  $ is the state space,
    \item $A$ is the set of actions, and for each $v=(s, q) \in V$, $A_v = A_s$ is the $v$' set of 
    available actions,
    \item  $\calP$ is the probabilistic transition function, where for each states $(s, q), (s', q') \in V$ and action $a \in A$,
          \begin{align*}
            \calP((s,q), a, (s',q'))= 
            P(\replaced{s}{s_1},a,\replaced{s'}{s_2}) \cdot \indicator(\delta(q,  L(s'))= q')
          \end{align*}
    \item $v_0=(s_0,   \delta(\init, L(s_0)))$ is the initial state.
    \item $R: V\times A\rightarrow \reals$ is the reward function, where for each $(s,q)\in V$ and $a\in A$, if $a=a_\bot$ and  $\weight(q,\ltimes, \sink)>0$, then $R((s,q), a_\bot) =  opt(\varphi) -  \weight(q, \ltimes, \sink)+1$, else $R((s,q), a ) = 0$. 
  \end{itemize}
\end{definition}

Given  a finite run $h = v_0a_0v_1a_1\ldots v_n$ in the product \ac{mdp}, the total reward is $R(h)= \sum_{i=0}^{n-1} R(v_i, a_i)$. Since a run $h$ in the product \ac{mdp} corresponds to a run $\rho$ in the original \ac{mdp} except that each state in $\rho$ is augmented with an automaton state, we use $\mathsf{Proj}_S (h)$ to compute the projection of the run $h  = (s_0, q_0) (s_1,q_1)\ldots (s_n, q_n) \in V^\ast$ to a run $s_0 s_1s_2\ldots s_n \in S^\ast$ whose labeling is $L(s_0s_1\ldots s_n) =  L(s_0)L(s_1)\ldots L(s_n)$. 
We denote the set of finite runs in $\calM$ by $Runs(\calM)$.
%

%
Based on the reward function,   the expected total reward of a  {nonstationary} policy $\pi: V^\ast \rightarrow \dist(A) $ for an initial state $v\in V$ is defined as
\[
  J_\pi(v)= \lim\sup_{N\rightarrow \infty} J_{\pi,N}(v),
\]
with $J_{\pi, N}(v)$ being the expected $N$-stage
reward of $\pi$ for state $v$:
\[
  J_{\pi,N}(v) = E \left [ \sum_{t=0}^{N} R(V_k, \pi(V_0\ldots V_k))\mid V_0= v \right],
\]
where $V_k$ is the state at time $k$.  The expectation  is with respect to the probability distribution of  paths in Markov chain $\calM^\pi$. 

\begin{lemma}
  For any  policy $\pi: V^\ast \rightarrow \dist(A)$ of the product \ac{mdp} $\calM$, for any $v\in V$, $J_\pi(v) < \infty$.
\end{lemma}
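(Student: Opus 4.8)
The plan is to prove the statement by a pathwise argument: I will show that along \emph{every} run of the product MDP $\calM$ the accumulated reward is at most a fixed constant that depends only on $\varphi$ (in fact only on $opt(\varphi)$), from which boundedness of $J_\pi(v)$ for any history-dependent policy $\pi$ follows immediately by taking expectations.

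The first step is to bound a single reward. By Lemma~\ref{lma:option}, $\weight(q,\ltimes,\sink)\le opt(\varphi)$ for every $q$ at which this weight is defined, and every transition weight produced by the constructions of Definitions~\ref{def:wdfa_ltlf}, \ref{def:wdfa_orderOR}, and \ref{def:wdfa_prioritizedAnd} is nonnegative. Hence, whenever the reward $R((s,q),a_\bot)=opt(\varphi)-\weight(q,\ltimes,\sink)+1$ is positive it also satisfies $R((s,q),a_\bot)\le opt(\varphi)+1$, while by Definition~\ref{def:prodMDP} every other value of $R$ is $0$. Therefore $0\le R(v,a)\le opt(\varphi)+1$ for all $v\in V$, $a\in A_v$.

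The second step, which is the crux, is to show that along any run $h=v_0a_0v_1a_1\cdots$ of $\calM$ at most one term $R(v_t,a_t)$ is nonzero. I would argue from four observations: (i) by the reward definition, $R(v,a)\neq 0$ requires $a=a_\bot$; (ii) taking $a_\bot$ at any $(s,q)$ moves $\calM$ surely to $(s_\bot,\delta(q,L(s_\bot)))=(s_\bot,\sink)$, since $L(s_\bot)=\ltimes$ and reading $\ltimes$ leads to $\sink$ in every WDFA produced by Definitions~\ref{def:wdfa_ltlf}--\ref{def:wdfa_prioritizedAnd}; (iii) $(s_\bot,\sink)$ is absorbing in $\calM$---because $s_\bot$ is a sink of $M$ and $\sink$ is an absorbing, zero-weight state of the WDFA---and since $\weight(\sink,\ltimes,\sink)=0$ we get $R((s_\bot,\sink),a)=0$ for every $a$; and (iv) the only label that sends the automaton component into $\sink$ is $\ltimes$, and $L^{-1}(\ltimes)=\{s_\bot\}$, so the automaton component of $v_t$ stays in $Q$ while the $M$-component has not yet reached $s_\bot$ and equals $\sink$ thereafter. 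Letting $T$ be the first time the $M$-component of $h$ equals $s_\bot$ (possibly $T=\infty$), (i)--(iv) imply that $R(v_t,a_t)$ can be nonzero only for $t=T-1$, and then only if $a_{T-1}=a_\bot$; in all cases $R(h)=\sum_{t\ge 0}R(v_t,a_t)\le opt(\varphi)+1$. Since every reward is nonnegative, for each $N$ the partial sum $\sum_{t=0}^{N}R(V_t,\pi(V_0\cdots V_t))\le opt(\varphi)+1$ surely, hence $J_{\pi,N}(v)\le opt(\varphi)+1$ and $J_\pi(v)=\limsup_{N\to\infty}J_{\pi,N}(v)\le opt(\varphi)+1<\infty$.

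The main obstacle is the bookkeeping in step two---specifically, extracting from the three WDFA constructions the facts used in observations (ii)--(iv): that reading $\ltimes$ always lands in the sink, that the sink is absorbing with zero self-loop weight, and that $\ltimes$ is read (equivalently, the sink is entered) exactly when $M$ moves into $s_\bot$. These are straightforward inductions over the structure of $\varphi$ using the base case in Definition~\ref{def:wdfa_ltlf}, but they must be stated carefully to make the ``at most one nonzero reward'' claim airtight.
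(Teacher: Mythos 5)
Your proposal is correct and follows essentially the same route as the paper's proof: the reward is nonzero only on the single transition into the absorbing state $(s_\bot,\sink)$, so the total reward of any run is bounded by a constant depending only on $opt(\varphi)$, and boundedness of $J_\pi(v)$ follows by taking expectations. The only cosmetic difference is your per-step bound of $opt(\varphi)+1$ versus the paper's $opt(\varphi)$ (which uses $\weight(q,\ltimes,\sink)\ge 1$ when positive); both suffice.
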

The proof is in Appendix~\ref{app:proofs}.

The optimal value function is defined to be 
\[ 
    J^\ast(v)  =\arg\max_{\pi} J_\pi(v).
\]
%
%
For optimal planning to maximize the total reward, $J^\ast(v)$ can be attained by a Markovian policy \cite{puterman2014markov}. 
%
Therefore, in
 the following, we only consider Markovian policies. 
 We also consider the Bellman operator $T$, defined by
\begin{align*}
    T J(v) = &\max_{\pi \in \Pi}  \sum_{a\in A_v}[ R(v,a)\pi(a \mid v) \\ &+ \sum_{v' \in V}\calP(v'|v,a)\pi(a \mid v)J(v\added{'})],
\end{align*}
and the optimal value function satisfies $TJ^\ast = J^\ast$.

Among all the Markovian policies for the product MDP, we consider only the \emph{proper} ones.
%


 \begin{definition} [Extended from \cite{bertsekas2013stochastic}]
  A policy $\pi$ for the \ac{mdp} $\calM$ is \emph{proper} if it guarantees that the sink state
  $(s_{\bot}, \sink)$ will be reached with probability one.
\end{definition}

\begin{lemma}
  The optimal value $J^\ast(v)$ for any $v\in V$ can be obtained by a proper, Markovian policy of   product \ac{mdp} $\calM$.
\end{lemma}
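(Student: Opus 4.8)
The plan is to take a stationary Markovian policy $\pi^\ast$ that attains $J^\ast$ at every state of $\calM$ --- such a policy exists by the result of \cite{puterman2014markov} invoked above, since the preceding lemma guarantees $J^\ast$ is finite and $TJ^\ast=J^\ast$ --- and then to redirect $\pi^\ast$ on exactly those states from which it fails to reach $(s_{\bot},\sink)$, in a way that does not change its value.

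First I would isolate two elementary facts about the reward. Every reward is $0$ except the terminating reward $R((s,q),a_\bot)=opt(\varphi)-\weight(q,\ltimes,\sink)+1$, which is collected only when $\weight(q,\ltimes,\sink)>0$; since all transition weights are integers and Lemma~\ref{lma:option} gives $\weight(q,\ltimes,\sink)\le opt(\varphi)$, this reward is at least $1$ whenever it is nonzero, and $R\ge 0$ throughout. Also, $a_\bot$ deterministically leads to $(s_{\bot},\sink)$, which is absorbing because $s_{\bot}$ is a sink of the MDP and $\delta(\sink,\cdot)=\sink$; consequently, for any state $v=(s,q)$ with $\weight(q,\ltimes,\sink)>0$ the policy ``play $a_\bot$ now'' already achieves value $\ge 1$, so $J^\ast(v)\ge 1$.

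Next I would decompose the finite Markov chain $\calM^{\pi^\ast}$: from any state it reaches a recurrent (closed communicating) class with probability one, and $\{(s_{\bot},\sink)\}$ is one such class. For any other recurrent class $C$, no state of $C$ can put positive probability on $a_\bot$ under $\pi^\ast$ (that would leave the closed class $C$), so runs that enter $C$ collect only zero rewards from then on and collected no reward before entering (the only reward-bearing action $a_\bot$ has not been used, since using it would have led to $(s_{\bot},\sink)\notin C$); hence $J^\ast(v)=J_{\pi^\ast}(v)=0$ for every $v\in C$, which by the previous paragraph forces $\weight(q,\ltimes,\sink)=0$ for every $v=(s,q)\in C$. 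Now define $\pi'$ to equal $\pi^\ast$ except on states lying in a recurrent class $C\ne\{(s_{\bot},\sink)\}$ of $\calM^{\pi^\ast}$, where $\pi'(v)=a_\bot$. Then $\pi'$ is proper: a $\pi'$-run follows $\pi^\ast$ until it first enters a recurrent class and is then absorbed in $(s_{\bot},\sink)$ either directly (if that class is $\{(s_{\bot},\sink)\}$) or after one $a_\bot$ step (otherwise). And $\pi'$ preserves the value: on the event that the $\pi^\ast$-run is absorbed in $(s_{\bot},\sink)$ it never visits a bad class, so $\pi'$ reproduces the same run and the same reward; on the complementary event the $\pi^\ast$-run stays forever in some bad class $C$ with total reward $0$, while $\pi'$ collects $R(v,a_\bot)=0$ (since $\weight(q,\ltimes,\sink)=0$ on $C$) and then terminates --- zero in either case. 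Thus $J_{\pi'}\equiv J_{\pi^\ast}\equiv J^\ast$, and $\pi'$ is a proper Markovian optimal policy.

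The step I expect to be the main obstacle is the last one: verifying cleanly that replacing $\pi^\ast$ by $\pi'$ induces the identical trajectory law on the event of absorption in $(s_{\bot},\sink)$ --- which rests on the observation that a run which ever enters a closed class other than $\{(s_{\bot},\sink)\}$ can never reach $(s_{\bot},\sink)$, so the modification is invisible to absorbing runs --- and that the residual event contributes exactly $0$ to both $J_{\pi'}$ and $J_{\pi^\ast}$. A secondary point needing care is that the cited optimality result really delivers a single stationary policy that is optimal at every state, which is where the finiteness lemma and $TJ^\ast=J^\ast$ are used.
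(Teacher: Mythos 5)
Your proof is correct and follows essentially the same route as the paper's: start from an optimal Markovian policy and redirect it to terminate (play $a_\bot$) wherever it would otherwise fail to reach $(s_{\bot},\sink)$, observing that such trajectories earn zero total reward so the value is unchanged. Your recurrent-class decomposition, and the check that optimality forces $\weight(q,\ltimes,\sink)=0$ on every recurrent class other than $\{(s_{\bot},\sink)\}$, simply make rigorous the steps the paper's proof only sketches.
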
 The proof is in Appendix~\ref{app:proofs}.

Thus, to search an optimal policy, we need to consider only proper, Markovian policies. We now relate the reward maximizing problem in the product \ac{mdp} to the planning objective of minimizing the expected dissatisfaction score.



\begin{lemma}
\label{lem:d_and_R}
  For each path $h  = (s_0, q_0) (s_1,q_1)\ldots (s_n, q_n) \in V^\ast$, it holds that,
  \begin{multline*}
            \dos(L(\mathsf{Proj}_S (h)), \varphi) = 
      \begin{cases}
     1-\frac{R(h)}{opt(\varphi)+1} & \text{if } R(h) > 0,  \\
      1  & \text{if } R(h) = 0 
     .\end{cases}
  \end{multline*}
\end{lemma}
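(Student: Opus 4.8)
Here is how I would prove Lemma~\ref{lem:d_and_R}. The plan is to follow the path $h$ through the product construction of Definition~\ref{def:prodMDP} and read $R(h)$ off the weight of the corresponding run in the \ac{wdfa} $\calA$ for $\varphi$, then apply the structural lemmas of the previous section.

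First I would set up the bookkeeping that links $h$ to $\calA$. From the product transition function and the initialization $v_0=(s_0,\delta(\init, L(s_0)))$, a straightforward induction on $n$ shows that the automaton component of the last state of $h$ satisfies $q_n=\delta(\init, L(\mathsf{Proj}_S(h)))$; that is, $q_n$ is the state $\calA$ reaches after reading the trace $L(\mathsf{Proj}_S(h))$ over $\Sigma$, just before the end marker $\ltimes$. Inspecting the three \ac{wdfa} constructions (Definitions~\ref{def:wdfa_ltlf}, \ref{def:wdfa_orderOR}, \ref{def:wdfa_prioritizedAnd}), every transition on a symbol of $\Sigma$ carries weight $0$, so the total weight of the run of $\calA$ induced by $L(\mathsf{Proj}_S(h))\ltimes$ equals the single number $\weight(q_n,\ltimes,\sink)$. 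On the reward side, the reward function of Definition~\ref{def:prodMDP} vanishes on every transition of $h$ except the terminating one (taking $a_\bot$ at $(s_n,q_n)$), so $R(h)$ equals the reward of that transition, namely $opt(\varphi)-\weight(q_n,\ltimes,\sink)+1$ if $\weight(q_n,\ltimes,\sink)>0$ and $0$ otherwise.

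Next I would invoke the structural lemmas. Every \ac{pqcltlf} formula $\varphi$ is either an \ac{ltlf} formula, or of the form $\varphi_1\orderedOR\varphi_2$, or of the form $\varphi_1\prioritizedAnd\varphi_2$, and the \ac{wdfa} is built recursively, so exactly one of Lemmas~\ref{lma:ltlf}, \ref{lma:orderedDisjunct}, \ref{lma:prioritizedConj} applies at the top level (recursively to the sub-\ac{wdfa}s). Together they yield: if $\weight(q_n,\ltimes,\sink)=k>0$ then $L(\mathsf{Proj}_S(h))\models_k\varphi$, and if $\weight(q_n,\ltimes,\sink)=0$ then $L(\mathsf{Proj}_S(h))\not\models\varphi$. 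Then I split into cases. If $\weight(q_n,\ltimes,\sink)=k>0$, Lemma~\ref{lma:option} gives $1\le k\le opt(\varphi)$, hence $R(h)=opt(\varphi)-k+1\in[1,opt(\varphi)]$, so $R(h)>0$; Definition~\ref{def:dissatisfaction_score} gives $\dos(L(\mathsf{Proj}_S(h)),\varphi)=\tfrac{k}{opt(\varphi)+1}$, and substituting $k=opt(\varphi)-R(h)+1$ gives $1-\tfrac{R(h)}{opt(\varphi)+1}=\tfrac{k}{opt(\varphi)+1}$, as required. If $\weight(q_n,\ltimes,\sink)=0$, then $R(h)=0$ and $L(\mathsf{Proj}_S(h))\not\models\varphi$, so Definition~\ref{def:dissatisfaction_score} gives $\dos(L(\mathsf{Proj}_S(h)),\varphi)=1$, again matching.

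The main obstacle is not any single computation but keeping three pieces of bookkeeping aligned at once: the automaton state tracked in $\calM$, the weight accumulated along the induced \ac{wdfa} run, and the handling of the end-of-string marker (recall $L(s_\bot)=\ltimes$), together with verifying that the three \ac{wdfa} constructions compose correctly under nesting so that the lemmas can be applied recursively to subformulas. One should also dispatch the degenerate case $\lang(\varphi)=\emptyset$: there no run of $\calA$ has positive weight, so $R(h)=0$ and $\dos(L(\mathsf{Proj}_S(h)),\varphi)=1$ for every $h$, consistent with the claim.
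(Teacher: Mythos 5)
Your proof is correct and follows essentially the same route as the paper's: relate $R(h)$ to the terminal weight $\weight(q_n,\ltimes,\sink)$, invoke Lemmas~\ref{lma:ltlf}, \ref{lma:orderedDisjunct}, \ref{lma:prioritizedConj} to identify that weight with the satisfaction degree, and substitute into Definition~\ref{def:dissatisfaction_score}. You simply make explicit several steps the paper leaves implicit (the induction showing $q_n=\delta(\init,L(\mathsf{Proj}_S(h)))$, the fact that only the $\ltimes$-transition carries weight, and the bound $k\le opt(\varphi)$ from Lemma~\ref{lma:option} ensuring $R(h)>0$).
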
The proof is in Appendix~\ref{app:proofs}.

\begin{theorem}
  Let $\pi: V \rightarrow \dist(A)$ be a policy for the product MDP $\calM$.
  Construct from $\pi$, a policy $\pi': S^\ast \rightarrow\dist(A)$ for $M$ such that for each
  $\rho: s_0 s_1 \cdots s_n \in S^*$, $\pi'(\rho) = \pi((s_n, \delta(q_0, L(\rho))))$.
  If $\pi$ is an optimal policy for $\calM$, then $\pi'$ is an optimal policy that minimizes the expected dissatisfaction score, \ie, the solution to \PPwPPoTLO.
\end{theorem}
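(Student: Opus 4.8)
The plan is to show that, under the flattening $\pi \mapsto \pi'$, the expected dissatisfaction score of $\pi'$ in $M$ is an affine, strictly decreasing function of the expected total reward of $\pi$ in $\calM$; optimality then transfers immediately, provided one also checks that no policy for $M$ can do strictly better than the flattening of an optimal $\calM$-policy.

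First I would fix the path correspondence. Because the \ac{wdfa} $\calA$ is deterministic, every reachable finite path $\rho = s_0 s_1 \cdots s_n \in S^\ast$ of $M$ lifts to the unique run $h(\rho) = (s_0,q_0)(s_1,q_1)\cdots(s_n,q_n) \in Runs(\calM)$ with $q_0 = \delta(\init, L(s_0))$ and $q_{i+1} = \delta(q_i, L(s_{i+1}))$, and $\mathsf{Proj}_S(h(\rho)) = \rho$; this is a bijection. Since $\calP((s,q),a,(s',q')) = P(s,a,s')\cdot\indicator(\delta(q,L(s'))=q')$ and $\pi'$ is defined so that its distribution at $\rho$ equals $\pi$'s distribution at $\last(h(\rho)) = (s_n, \delta(\init, L(\rho)))$, a straightforward induction on $n$ gives $\Pr(\rho; M^{\pi'}) = \Pr(h(\rho); \calM^\pi)$. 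By the earlier lemma, $J^\ast$ is attained by a proper Markovian policy, so I may assume $\pi$ is proper; then $\calM^\pi$ reaches $(s_\bot, \sink)$ with probability one, and restricting to terminating runs (those ending at $(s_\bot,\sink)$) the numbers $\Pr(h;\calM^\pi)$ form a probability distribution. As $R$ is nonzero only on the final $a_\bot$-transition, $J_\pi(v_0) = \sum_h \Pr(h;\calM^\pi)\,R(h)$ over terminating runs $h$, and correspondingly $\dos(\pi',\varphi) = \sum_\rho \Pr(\rho; M^{\pi'})\,\dos(L(\rho),\varphi)$ is really an expectation over terminating paths.

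Next I would plug in Lemma~\ref{lem:d_and_R}: for a terminating run $h$, $\dos(L(\mathsf{Proj}_S(h)),\varphi) = 1 - \tfrac{R(h)}{opt(\varphi)+1}$ (the $R(h)=0$ case agreeing with this expression), whence
\begin{align*}
  \dos(\pi',\varphi)
  &= \sum_{h}\Pr(h;\calM^\pi)\Bigl(1 - \tfrac{R(h)}{opt(\varphi)+1}\Bigr)
   = 1 - \frac{1}{opt(\varphi)+1}\sum_{h}\Pr(h;\calM^\pi)\,R(h) \\
  &= 1 - \frac{J_\pi(v_0)}{opt(\varphi)+1}.
\end{align*}
Because $opt(\varphi)+1 > 0$, the map $x \mapsto 1 - x/(opt(\varphi)+1)$ is strictly decreasing, so among flattenings $\dos(\pi',\varphi)$ is minimized exactly when $J_\pi(v_0)$ is maximized; for $\pi$ optimal in $\calM$ this gives $\dos(\pi',\varphi) = 1 - J^\ast(v_0)/(opt(\varphi)+1)$. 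It remains to see that $\pi'$ beats every policy for $M$: given an arbitrary finite-memory randomized $\bar\pi: S^\ast \to \dist(A)$, I would define a history-dependent policy $\hat\pi$ on $\calM$ by $\hat\pi(h) = \bar\pi(\mathsf{Proj}_S(h))$ on reachable runs $h$ --- well-defined since the automaton component of a run is a deterministic function of its $S$-projection --- and arbitrary elsewhere. The same path correspondence and the computation above yield $\dos(\bar\pi,\varphi) = 1 - J_{\hat\pi}(v_0)/(opt(\varphi)+1) \ge 1 - J^\ast(v_0)/(opt(\varphi)+1) = \dos(\pi',\varphi)$, since $J_{\hat\pi}(v_0) \le J^\ast(v_0)$. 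Hence $\pi'$ minimizes the expected dissatisfaction score and solves \PPwPPoTLO.

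The step I expect to be the main obstacle is the bookkeeping in the second paragraph: making precise that the a priori infinite sum defining $\dos(\pi,\varphi)$ over all finite paths collapses to a genuine expectation over terminating runs, so that it equals $1 - J_\pi(v_0)/(opt(\varphi)+1)$ rather than merely relating to it --- this is where properness and the placement of reward solely on the $a_\bot$-transition are essential. The policy-simulation argument of the last paragraph is conceptually routine, but it relies on the same determinism-of-$\calA$ observation being stated explicitly.
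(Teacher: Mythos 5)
Your proof is correct and rests on the same core identity as the paper's: using Lemma~\ref{lem:d_and_R} to turn the expected total reward into an affine, strictly decreasing function of the expected dissatisfaction score, namely $\dos(\pi',\varphi) = 1 - J_\pi(v_0)/(opt(\varphi)+1)$, which is exactly the paper's $J_\pi(v_0) = K - K\cdot\dos(\pi',\varphi)$ with $K = opt(\varphi)+1$. The paper reaches this identity by splitting the sum over runs into the $R(h)=0$ and $R(h)\neq 0$ parts and invoking $\sum_h \Pr(h;\calM^\pi)=1$; you reach it by first restricting to terminating runs via properness, which is cleaner and also makes the sum over ``all finite paths'' well defined (as written, summing cylinder probabilities over every finite prefix in $V^\ast$ would not give a probability distribution). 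Two things you supply go beyond what the paper writes down and are genuinely needed. First, the explicit determinism-based bijection between reachable paths of $M$ and runs of $\calM$, together with the induction giving $\Pr(\rho;M^{\pi'})=\Pr(h(\rho);\calM^\pi)$; the paper silently identifies the two probability spaces when it ``relates'' the two expansions. Second, and more substantively, your last paragraph: the affine identity by itself only shows that among \emph{flattenings of product-MDP policies} the one induced by an optimal $\pi$ is best, whereas the problem asks for optimality over all finite-memory policies $\bar\pi:S^\ast\to\dist(A)$ of $M$. Your lifting $\hat\pi(h)=\bar\pi(\mathsf{Proj}_S(h))$, well defined because the automaton component of a run is a deterministic function of its $S$-projection, closes this gap; the paper's conclusion $\argmax_\pi J_\pi(v_0)=\argmin_{\pi'}\dos(\pi',\varphi)$ implicitly assumes it. So: same route, but your version is the one that actually proves the stated optimality claim in full.
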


\begin{proof}
We  establish a connection between the expected dissatisfication score of $\pi'$ and the value of $\pi$.
First, we use (\ref{eq:policy_sat_degree}) to expand the expected dissatisfication score of $\pi'$:
  \begin{equation}
  \label{eq:pi_prim_value}
     \dos(\pi',\varphi)
     = \sum_{\rho \in S^\ast} \Pr(\rho;M^{\pi'})\cdot \dos(L(\rho),\varphi)  
  \end{equation}
  Next, we expand the value of $\pi$.
  \begin{align}
   & J_{\pi}(v_0)
    = \sum_{h \in V^\ast} \Pr(h; \calM^\pi)\cdot R(h) \nonumber  \\
    & = \sum_{h \in V^\ast: R(h) = 0} \Pr(h;\calM^\pi)\cdot 0 \nonumber \\ 
    &    + \sum_{h \in V^\ast: R(h) \neq 0} \Pr(h; \calM^\pi) \cdot R(h) 
  \end{align}
  Using the result of Lemma~\ref{lem:d_and_R}, we write this summation as:
  \begin{align}
    J_{\pi}(v_0) = &   \sum\limits_{\substack{h \in V^\ast:  R(h) \neq 0}} \Pr(h; \calM^\pi) \cdot (1 -\dos(L(\mathsf{Proj}_S(h)))) \nonumber \\
   &\cdot (opt(\varphi)+1)) \nonumber \\
   & \text{[ \textsl{replace $opt(\varphi)+1)$ by $K$, $\Pr(h; \calM^\pi)$ by $\Pr^\pi(h)$}.]}\nonumber\\
  = &   (K \cdot \sum\limits_{\substack{h \in V^\ast:  R(h) \neq 0}} \Pr^\pi(h) \nonumber \\
   & - K \cdot \sum\limits_{\substack{h \in V^\ast:  R(h) \neq 0}} \Pr^\pi(h)\dos(L(\mathsf{Proj}_S(h))) \nonumber \\
  = &   K\cdot \sum\limits_{\substack{h \in V^\ast}} \Pr^\pi(h) - K\cdot \sum\limits_{\substack{h \in V^\ast:  R(h) = 0}} \Pr^\pi(h) \nonumber \\
   & -  K\cdot \sum\limits_{\substack{h \in V^\ast:  R(h) \neq 0}} \Pr^\pi(h)  \dos(L(\mathsf{Proj}_S(h))   \label{eq:1} \\
= &   (K  -   K\cdot \sum\limits_{\substack{h \in V^\ast:  R(h) = 0}} \Pr^\pi(h) \cdot 1 \nonumber \\
& -   K\cdot \sum\limits_{\substack{h \in V^\ast:  R(h) \neq 0}} \Pr^\pi(h)  \dos(L(\mathsf{Proj}_S(h)) \label{eq:2} \\
= &   K  -  
 K \cdot \sum\limits_{\substack{h \in V^\ast} }\Pr^\pi(h) \cdot  \dos(L(\mathsf{Proj}_S(h)) ) \label{eq:3}
  \end{align}

  From \eqref{eq:1} to \eqref{eq:2}, we use the probability axiom that $\sum_{h\in V^\ast }\Pr(h; \calM^\pi)=1$. From \eqref{eq:2} to \eqref{eq:3}, we use Lemma~\ref{lem:d_and_R} that if $R(h)=0$ then  $\dos(L(\mathsf{Proj}_S(h)))=1$. 
  Thus, relating \eqref{eq:3} and \eqref{eq:pi_prim_value}, we have
  \begin{align}
   & J_{\pi}(v_0) = K- K\cdot \dos(\pi',\varphi),  
  \end{align}
  and therefore $\argmax_\pi J_\pi(v_0)  = \argmin_{\pi'} \dos(\pi',\varphi) $, that is, a policy $\pi$ that maximizes $J$ yields a policy $\pi'$ that minimizes the dissatisfaction score $\dos$.
\end{proof}

\section{Complexity Analysis}
The first step of the algorithm constructs a \ac{wdfa} that encodes $\varphi$. The constructed \ac{dfa} from \ac{ltlf} formulas is double-exponential in the size of the formulas in the worst case \cite{wolperConstructingAutomataTemporal2001a,de2021compositional}.  However, in practice this translation is tractable for commonly seen \ac{ltlf} formulas in robotic planning. 
%
The construction of automata for ordered disjunction $\varphi_1\orderedOR \varphi_2$ and prioritized conjunction $\varphi_1\&\varphi_2$ using Def.~\ref{def:wdfa_orderOR} and Def.~\ref{def:wdfa_prioritizedAnd}, respectively, takes a polynomial time to the sizes of the \ac{wdfa}'s for sub-formulas $\varphi_1$ and $\varphi_2$.
Constructing the product MDP $\calM$ takes a polynomial time to the size of the \ac{wdfa} and the original \ac{mdp}.
And computing an optimal policy for $\calM$ takes a  time polynomial in the size of the product \ac{mdp} $\calM$, using standard techniques (value/policy iteration or linear programming).

\section{Experiment}

\label{sec:experiment}

We show the efficacy  of the proposed algorithm using several examples of probabilistic robotic motion planning. 
%
\footnote{All experiments are executed on an Ubuntu 20.04 machine with AMD Ryzen 9 5900X CPU and 32 GB RAM. We use the Gurobi solver for planning in \ac{mdp}. The computational times of solving the optimal planning problem for any $8 \times 8$ gridworlds with different formulas are no more than $0.1$ seconds. The code can be found in the supplementary file.}

\begin{figure}[ht]
    \centering    \includegraphics[width=0.75\linewidth]{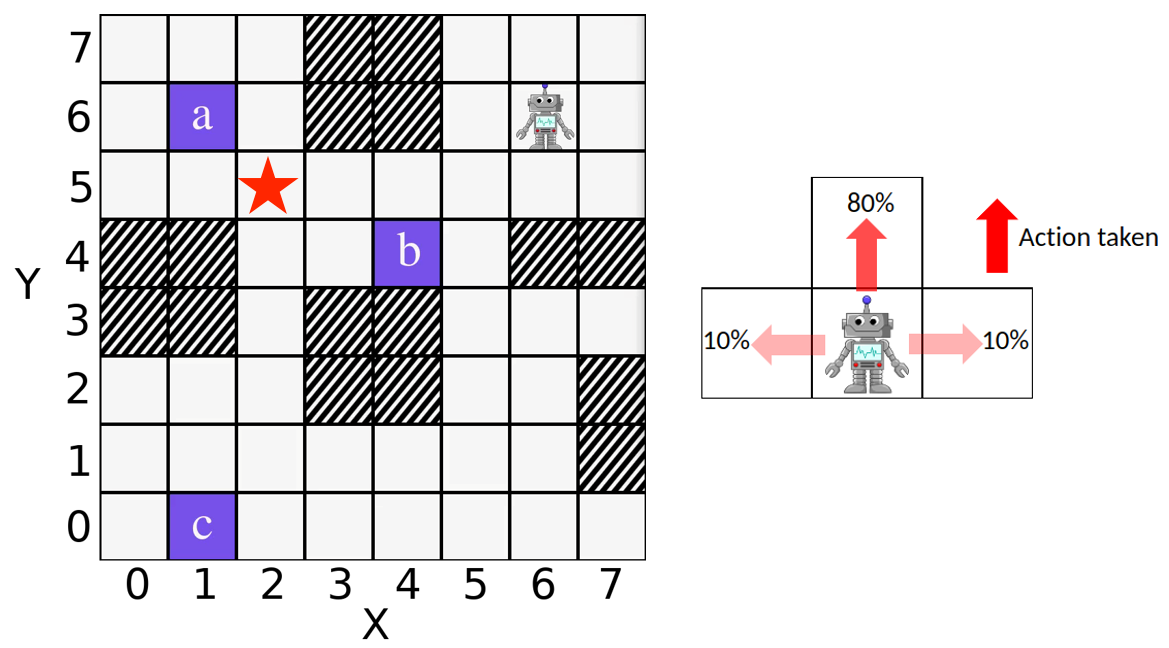}
    \caption{A $8 \times 8$ stochastic gridworld $g_1$ (without the red star) and the transition probabilities when an action ``N'' is taken. The red star is an additional hole introduced in gridworld $g_2$. }
    \label{fig:grid_world_1} 
    \vspace{-2ex}
\end{figure}
Consider a small stochastic gridworld $g_1$ shown in Fig.~\ref{fig:grid_world_1}. For each state $s \in S$, the robot has four actions: ``N'', ``W'', ``S'', ``E''. After taking an action from a state, the robot transits to the \emph{intended} cell with probability $0.8$ and slips to \emph{unintended} cells with probability $0.1$. If the robot takes an action and reaches the boundary wall, then it stays in the original cell.  
The initial state of the robot is 
$(6, 6)$. The shaded areas denote holes. Once the robot enters holes, it gets stuck. 
%
Regions of interest are labeled $a$, $b$, and $c$. Accordingly, $\calAP = \{a, b, c \}$. Each of these atomic propositions holds at a time instant when the robot is in the region labeled by the corresponding atomic proposition.

 
Given the set $\calAP$ of atomic propositions, we consider the following  preference formula   (see Appendix~\ref{app:example} for the \ac{wdfa}.)
\[
    \Eventually b \orderedOR (\Eventually a \lor \Eventually c).
\]

We computed the optimal policy $\pi_{\ast}$ that minimizes the expected dissatisfaction score.
To see the difference of ordered disjunction and regular disjunction, we also compute a optimal policy that maximizes the probability of satisfying formula $ \Eventually b \lor (\Eventually a \lor \Eventually c)$. 
We denote this policy as $\pi_{\lor}$. We plot the optimal values for different initial states in Fig~\ref{fig:case_1_normal}.
%
Then we perform policy evaluation of $\pi_{\ast}$ and $\pi_{\lor}$ against $\Eventually b$ and $\Eventually a \lor \Eventually c$, separately. The probabilites of satisfying formula $\phi$ for $\phi \in \{ \Eventually b,\Eventually a \lor \Eventually c \}$  for different initial states are shown in Fig.~\ref{fig:case_1_dis_b}, ~\ref{fig:case_1_dis_a_c}, ~\ref{fig:case_1_normal_b} and ~\ref{fig:case_1_normal_a_c}.
 Comparing Fig.~\ref{fig:case_1_dis_b} against Fig.~\ref{fig:case_1_normal_b} for the formula $\Eventually b$, $\pi_{\ast}$ achieves higher values in the most areas of the gridworld, especially at the top left corner. 
On the other side, comparing Fig.~\ref{fig:case_1_dis_a_c} against Fig.~\ref{fig:case_1_normal_a_c}, we can see that $\pi_{\lor}$ achieves higher probability of satisfying $\Eventually a \lor \Eventually c$ than that of policy $\pi_\ast$ in \deleted{the} most areas of the gridworld, especially at the top right corner. This comparison indicates that  when $\Eventually b$ is preferred to $\Eventually a \lor \Eventually c$, the preference-based policy   gravitates towards satisfying $\Eventually b$. 

\begin{figure}[ht]
    \centering    \vspace{-2ex}
    \begin{subfigure}[b]{0.48\linewidth}
    \includegraphics[trim={40 20 60 40},clip,width=\linewidth]{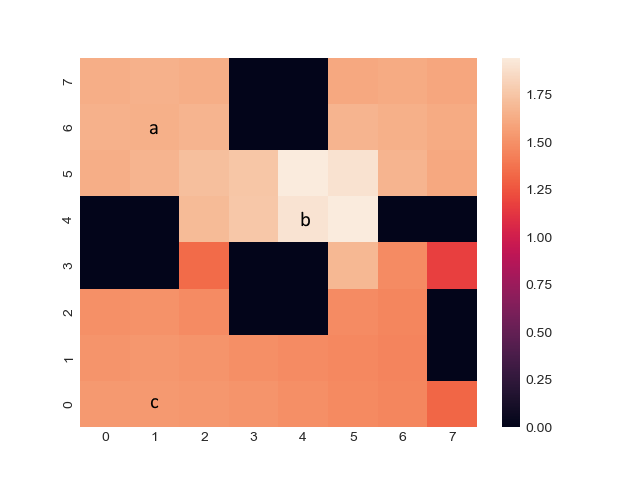}
    \subcaption{$\langle \pi_{\ast}, \Eventually b \orderedOR (\Eventually a \lor \Eventually c) \rangle$}
    \label{fig:case_1_dis}
    \end{subfigure}
    \hfill
    \begin{subfigure}[b]{0.48\linewidth}
    \includegraphics[trim={40 20 60 40},clip,width=\linewidth]{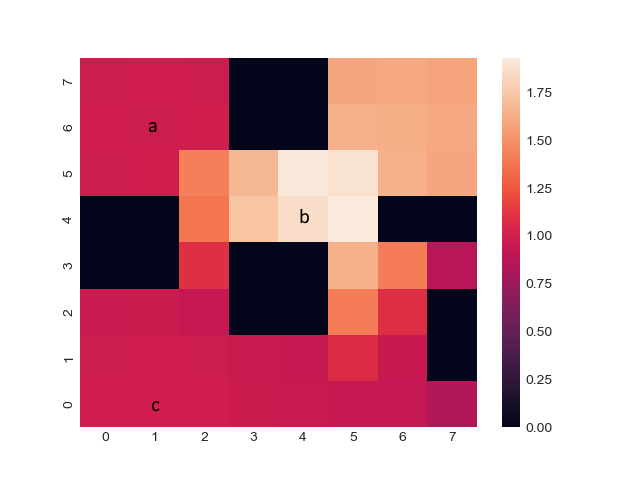}
    \subcaption{$\langle \pi_{\lor}, \Eventually b \lor (\Eventually a \lor \Eventually c) \rangle$}
    \label{fig:case_1_normal}
    \end{subfigure}
    \begin{subfigure}[b]{0.48\linewidth}
        \includegraphics[trim={40 20 60 40},clip,width=\linewidth]{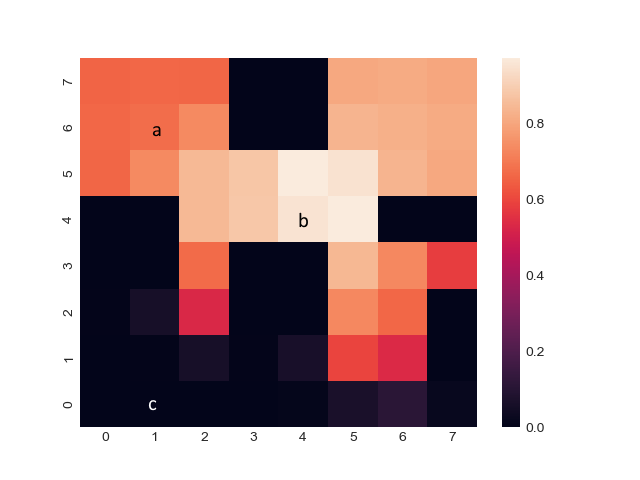}
     \subcaption{$\langle\pi_{\ast}, \Eventually b\rangle$}
    \label{fig:case_1_dis_b}
    \end{subfigure}
    \hfill
    \begin{subfigure}[b]{0.48\linewidth}
        \includegraphics[trim={40 20 60 40},clip,width=\linewidth]{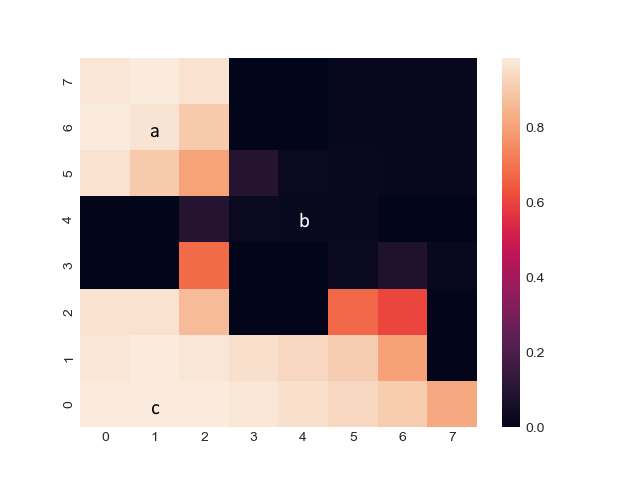}
    \subcaption{$\langle\pi_{\ast}, \Eventually a \lor \Eventually c\rangle$}
    \label{fig:case_1_dis_a_c}
    \end{subfigure}
    \begin{subfigure}[b]{0.48\linewidth}
        \includegraphics[trim={40 20 60 40},clip,width=\linewidth]{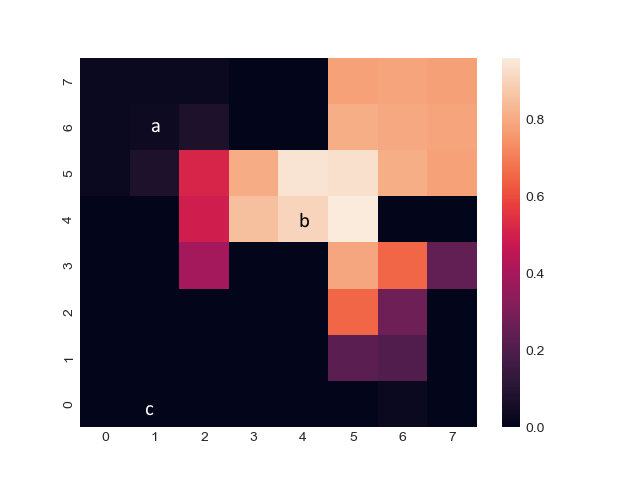}
    \subcaption{$\langle\pi_{\lor}, \Eventually b\rangle$}
    \label{fig:case_1_normal_b}
    \end{subfigure}
    \hfill
    \begin{subfigure}[b]{0.48\linewidth}
        \includegraphics[trim={40 20 60 40},clip,width=\linewidth]{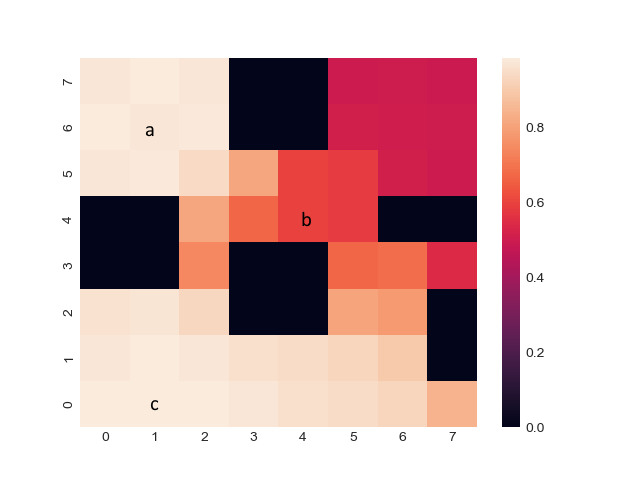}
    \subcaption{$\langle\pi_{\lor}, \Eventually a \lor \Eventually c\rangle$}
    \label{fig:case_1_normal_a_c}
    \end{subfigure}
    \vspace{-2ex}
    \caption{Each subfigure with the subcaption $\langle\pi, \phi\rangle$ is the value $J((\cdot, \vec{q}_0); \langle\pi, \phi\rangle)$ of policy evaluation of policy $\pi$ given the formula $\phi$  starting from different initial states in gridworld $g_1$. } 
    \label{fig:case_1_heatmaps}
\end{figure}

Next, we consider the following formula that has prioritized conjunction and nested ordered disjunctions:
$
    \varphi_3 = \varphi_1 \prioritizedAnd \varphi_2,$
where $\varphi_1 = \Eventually b \orderedOR (\Eventually a \lor \Eventually c)$ and $\varphi_2 = \Eventually(a \land \Eventually(b \land \Eventually c)) \orderedOR \Eventually (a \land \Eventually c) \lor \Eventually (b \land \Eventually c)$. This task formula describes that the system needs to satisfy $\varphi_1$ and $\varphi_2$ both, with $\varphi_1$ having a higher priority than $\varphi_2$. 

For this case, we consider an additional gridworld $g_2$ which includes an additional hole at the position $(2,5)$, which blocks the access to $a$. 
Given the formula $\varphi_3$, we compute the optimal policies $\pi_{\ddagger}$ when \deleted{the} region $a$ is accessible and $\pi_{\ddagger}^{-}$ when \deleted{the} region $a$ is inaccessible. 
%

\begin{figure}[H]
    \centering
    \begin{subfigure}[c]{0.23\textwidth}
        \includegraphics[trim={40 20 60 40},clip,width=\linewidth]{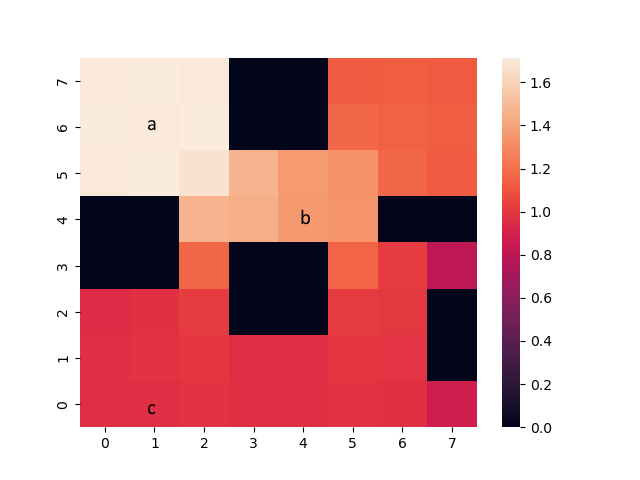}
        \subcaption{}
        \label{fig:varphi_3_value}
    \end{subfigure}
    \hfill
    \begin{subfigure}[c]{0.23\textwidth}
        \includegraphics[trim={40 20 60 40},clip,width=\linewidth]{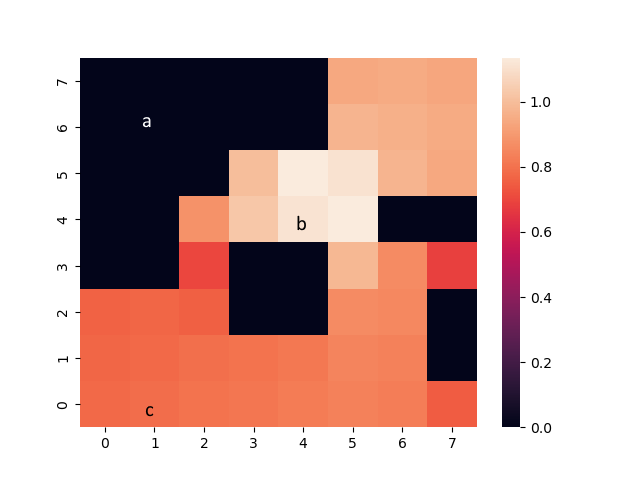}
        \subcaption{}
        \label{fig:varphi_4_value}
    \end{subfigure}    \vspace{-2ex}

    \caption{(a) The value $J((\cdot, \Vec{q}_0); \langle\pi_{\ddagger}, \varphi_3\rangle)$ in gridworld $g_1$. (b) The value $J((\cdot, \Vec{q}_0); \langle\pi_{\ddagger}^{-}, \varphi_3 \rangle)$ in gridworld $g_2$.}
\end{figure} 
We plot the heatmaps of $\varphi_3$ for these two gridworlds in Fig.~\ref{fig:varphi_3_value} and~\ref{fig:varphi_4_value}. The following observation is made: When $a$ is accessible, starting from the  upper left corner, the agent receives higher values with the optimal policy. But if $a$ is not accessible, the upper left corner states have values zero. This is because the formulas $\varphi_2$ cannot be satisfied   as the agent cannot reach region $c$ when starting from the upper left corner. Therefore, $\varphi_3$ is not satisfiable. The state values under the optimal policy given $a$ accessible are higher than the state values when $a$ is not accessible, indicating the agent can achieve a more preferred outcome in the gridworld $g_1$.

\section{Conclusion}
In this paper, we introduced a new specification language, termed prioritized qualitative choice linear temporal logic on finite traces (\ac{pqcltlf}), for compactly specifying a temporal goal along with the user's preferences on sub-goals. We presented an automatic translation from this language to weighted deterministic finite automaton. We used this translation in solving the problem of computing a policy that minimizes the expected dissatisfaction score of a given \ac{pqcltl} formula in a stochastic environment modeled by an MDP. By bridging the gap between preferences in AI and temporal logic planning, this work enables future study that incorporates preference elicitation and learning from positive/negative data and adaptive planning in sequential decision-making problems.
%
%

\bibliographystyle{named}
\bibliography{ijcai23}
\clearpage
\thispagestyle{empty}
\newpage
\appendix

  \section{Proofs}
\label{app:proofs}

Proof of Lemma 2

\begin{proof} The run $\rho$ has a weight 1 if and only if  $q_{n-1}\in F$, $\sigma_{n-1}=\ltimes $, and  $q_n=\sink$.
  Due to the acceptance condition for \ac{dfa}s, the word $\word(\rho)=\sigma_0 \sigma_1\ldots \sigma_{n-1}$ is accepted and thus satisfies the \ac{ltlf} formula $\varphi.$
\end{proof}

Proof of Lemma 3

\begin{proof}Given the penultimate state $\vec{q}_{n-1}= (q_1,q_2)$, if $\weight_1(q_1,\ltimes,\sink)=\weight_2(q_2,\ltimes, \sink)=0$, then $\word(\rho) \not \models \varphi$, meaning it does not satisfy $\varphi$. 
If $\weight_1(q_1,\ltimes,\sink)=0$, but $\weight_2(q_2,\ltimes, \sink)>0$, then $\word(\rho)$ satisfies $\varphi_2$ to a positive degree but does not satisfy $\varphi_1$. 
The satisfaction degree w.r.t. $\varphi_1\orderedOR \varphi_2$ is the sum of the satisfaction degree w.r.t. $\varphi_2$ and the optionality of $\varphi_1$. 
Else, if $\weight_1(q_1,\ltimes,\sink)>0$, then the satisfaction degree w.r.t. $\varphi_1 \orderedOR \varphi_2$ is the satisfaction degree w.r.t. $\varphi_1$.
\end{proof}

Proof  of Lemma 4

\begin{proof}
    The proof is by construction and similar to the proof of Lemma~\ref{lma:orderedDisjunct}. Thus, it is omitted.    
\end{proof}

Proof  of Lemma 5

\begin{proof}
  The property can be shown based on the recursive definition. First, it is clear that if the \ac{pqcltlf} formula is an \ac{ltlf} formula, then the optionality is one and the maximal weight of all defined transitions is one. Consider two \ac{pqcltlf} formulas $\varphi_1,\varphi_2$, and their corresponding \ac{wdfa}s $\calA =\langle Q_i \cup \{\sink\},\Sigma \cup \{
  \ltimes\}, \delta_i, q_{0i}, \weight_i \rangle $ that satisfies $opt(\varphi_i) = \max\{\weight_i(q,a,q')\mid \delta_i(q,a,q') \text{ is defined.}\}$.
  
  In the \ac{wdfa} of  the ordered disjunction $\varphi_1\orderedOR \varphi_2$, the maximal weight by construction is $\max\{\weight_2(q_2,\ltimes, \sink) + opt(\varphi_1)\} = \max\{\weight_2(q_2,\ltimes, \sink)\} + opt(\varphi_1) =  opt(\varphi_2) + opt(\varphi_1)$, which is consistent with Def.~\ref{def:optionality}.
  
  In the \ac{wdfa} of  the prioritized conjunction $\varphi_1\& \varphi_2$, the maximal weight by construction is $\max_{i,j}\{opt(\varphi_2)\times (i-1)+j\}$ where $0<i \le \max\{\weight_1(q_1,\ltimes, \sink)\} = opt(\varphi_1)$ and $0<j \le\max\{\weight_2(q_2,\ltimes, \sink)\} = opt(\varphi_2)  $. Therefore,   $\max_{i,j}\{opt(\varphi_2)\times (i-1)+j\} = opt(\varphi_2)\times (opt(\varphi_1)-1) + opt(\varphi_1) = opt(\varphi_2)\times opt(\varphi_1)$. This is again consistent with Def.~\ref{def:optionality}.
\end{proof}

Proof of Lemma 6.
\begin{proof}
  A finite run $\rho = v_0a_0 v_1a_1\ldots v_n$ receives a nonzero reward only if there exists $0\le k\le n$, $v_k  = (s_{\bot}, \sink)$, and for all $j \le k$, $v_j \neq (s_{\bot}, \sink)$. The total reward of $\rho$ is upper bounded by $opt(\varphi)$. Therefore, for any policy $\pi$ and any state $v$, the limit of $J_{\pi, N}(v)$ as $N\rightarrow \infty$ exists and is upper bounded by $opt(\varphi)$.
\end{proof}

Proof of Lemma 7.

\begin{proof}
  We show that for every improper, optimal Markovian policy, there is a proper, Markovian policy that obtains the same value.
  Consider an improper, optimal Markovian policy $\pi^\dagger$ under which there is an infinite run. Since the reward is only obtained by reaching state $(s_{\bot},\sink)$, the  infinite run $h$ will have a reward of zero.
  Thus, a proper policy $\pi^\ast$ that has the same value $J_{\pi^\dagger}(v)= J_{\pi^\ast}(v)$ can be constructed by copying $\pi^\dagger$ for all finite runs. For all infinite runs, $\pi^\ast$ is obtained from $\pi^\dagger$ by terminating at any state with a zero reward.
\end{proof}

Proof of Lemma 8.

 \begin{proof} Prove by construction. For the first case, let us recall $R(h) = \sum_{i}^{n-1}R(v_i, a_i)$. If $L(\mathsf{Proj}_S (h)) \models_k \varphi$ for some $k>0$, then $R(h) = opt(\varphi) -  \weight(q_n, \ltimes, \sink)+1$.
Plug in $R(h)$, and we have $\dos(L(\mathsf{Proj}_S (h)), \varphi) =1-\frac{R(h)}{opt(\varphi)+1} = \frac{\weight(q_n, \ltimes, \sink)}{opt(\varphi)+1}$,  complying with Lemma~\ref{lma:ltlf}, ~\ref{lma:orderedDisjunct}, and~\ref{lma:prioritizedConj}. For the second case, if  $L(\mathsf{Proj}_S (h)) \not \models \varphi$, then $R(h=0)$, then $\dos(L(\mathsf{Proj}_S (h)) = 1$, complying with Def.~\ref{def:dissatisfaction_score}. 
\end{proof}
 
\section{Example of weighted automata construction}
\label{app:example}

We illustrate the construction of \ac{wdfa} using an example.
\begin{example}
  \label{example:orderedOR}
Given two \ac{ltlf} formulas $\Eventually b$ and $ \Eventually a \lor \Eventually c$ and a \ac{pqcltlf} formula $ \Eventually b \orderedOR ( \Eventually a \lor \Eventually c)$, reading ``if possible, eventually satisfy $b$, and if not possible, eventually satisfy $a$ or $c$.'' The \ac{wdfa}s for the \ac{ltlf} formulas are shown in Fig.~\ref{fig:Fb} and~\ref{fig:FaORc}, and the \ac{wdfa} is shown in Fig.~\ref{fig:varphi_1}. For clarity, we use propositional logic formulas instead of $2^\calAP$ as the symbols for the transitions.  
For example, $b \wedge (a\lor c): 0$ stands for $\{b, a\}:0, \{b, c\}:0, $ and $\{b, a, c\}:0$. 
From Fig.~\ref{fig:varphi_1}, we see that the weight transits from $(0, 1)$ to $\sink$ is $2$, that is because by triggering that transition the satisfied formula $\Eventually a \lor \Eventually c$ is less preferred.
  \begin{figure}[ht]
    \centering
    \begin{subfigure}[b]{0.23\textwidth}
    \begin{tikzpicture}[->,>=stealth',shorten >=1pt,auto,node distance=2.5cm,
        scale = 0.7,transform shape]

  \node[state,initial] (0) {$0$};
  \node[state] (1) [above right of=0] {$1$};
  \node[state,accepting] (sink) [below right of=1] {$\sink$};

  \path (0) edge              node {$b:0$} (1)
        (0) edge              node {$\ltimes:0$} (sink)
        (1) edge              node {$\ltimes:1$} (sink)
        (0) edge   [loop below]           node {$\neg b:0$} (0)
        (1) edge   [loop above]           node {$\top:0 $} (1);

\end{tikzpicture}
    \vspace{-2em}
    \subcaption{}
    \label{fig:Fb}
    \end{subfigure}
    \hfill
    \begin{subfigure}[b]{0.23\textwidth}
        \begin{tikzpicture}[->,>=stealth',shorten >=1pt,auto,node distance=2.5cm,
        scale = 0.7,transform shape]

  \node[state,initial] (0) {$0$};
  \node[state] (1) [above right of=0] {$1$};
  \node[state,accepting] (sink) [below right of=1] {$\sink$};

  \path (0) edge              node {$a \lor c:0$} (1)
        (0) edge              node {$\ltimes:0$} (sink)
        (1) edge              node {$\ltimes:1$} (sink)
        (0) edge   [loop below]           node {$\neg a \land \neg c:0$} (0)
        (1) edge   [loop above]           node {$\top:0$} (1);

\end{tikzpicture}
        \vspace{-2em}
        \subcaption{}
        \label{fig:FaORc}
    \end{subfigure}
    \begin{subfigure}[b]{0.3\textwidth}
        \begin{tikzpicture}[->,>=stealth',shorten >=1pt,auto,node distance=2.5cm, scale = 0.7,transform shape]

  \node[state,initial] (0) {$0, 0$};
  \node[state] (1) [above right of=0] {$1, 0$};
  \node[state] (2) [below right of=0] {$0, 1$};
  \node[state] (3) [below right of=1] {$1, 1$};
  \node[state,accepting] (sink) [right of=3] {$\sink$};

  \path 
    (0) edge              node {$b\land (a\lor c):0$} (3)
  (0) edge              node {$b \wedge \neg a \wedge \neg c: 0$} (1)
  (0) edge              node {$\neg b \wedge (a \lor c):0$} (2)
  (1) edge              node {$a \lor c:0$} (3)
  (2) edge              node[right] {$b:0$} (3)
  (2) edge[loop below]              node[right] {$\neg b:0$} (2)
  (1) edge[loop above]              node {$\neg a \wedge \neg c:0$} (1)
  (3) edge[loop     right ]        node {$\top:0$} (3)
  
  (1) edge[bend left]              node {$\ltimes:1$} (sink)
  (2) edge[bend right]              node {$\ltimes:2$} (sink)
  (0) edge[bend right=100,looseness=1.6,out=270]              node[below] {$\ltimes:0$} (sink)
  (3) edge   [bend left]           node[left] {$\ltimes:1$} (sink);
\end{tikzpicture}
        \vspace{-3em}
    \subcaption{}
    \label{fig:varphi_1}
    \end{subfigure}
    \caption{(a) The \ac{wdfa} accepting the formula $\Eventually b$. (b) The \ac{wdfa} accepting the formula $\Eventually a \lor \Eventually c$. (c) The \ac{wdfa} accepting the formula $\Eventually b \orderedOR (\Eventually a \lor \Eventually c)$.}
    \label{fig:wdfa}
\end{figure}

In Table~\ref{tab:degrees_of_satisfaction} we list the satisfaction degrees given different words.
%
%
%
From the dissatisfaction scores, we have  $w_1 \succeq^\phi w_2 \succeq^\phi w_3$, where $\phi=\Eventually b \orderedOR (\Eventually a \lor \Eventually c)$.
\begin{table*}[!htb]
    \centering
    \caption{Dissatisfaction Scores for Words \emph{w.r.t.} $\Eventually b \orderedOR (\Eventually a \lor \Eventually c)$}
    \resizebox{0.85\linewidth}{!}{
    \begin{tabular}{c|l|cccc}
        \hline
        Words & Word descriptions       & $w \models_i \Eventually b$ & $ w \models_j \Eventually a \lor \Eventually c$ & $ w \models_k \Eventually b \orderedOR (\Eventually a \lor \Eventually c) $ & $\dos(w, \Eventually b \orderedOR (\Eventually a \lor \Eventually c))$ \\ \hline
        $w_1$ &  $\{b\}\{a\}\ltimes$       & $1$                       & $1$                        & $1$ & $1/3$ \\
        $w_2$ &  $\emptyset \emptyset \{a\} \ltimes $ & $\not \models $                       & $1$                        & $2$ &  $2/3$                                               \\
        $w_3$ & $\emptyset \emptyset \ltimes $         & $\not \models $                       & $\not \models $                        & $\not \models $ & $1$                                                 \\
        \hline
    \end{tabular}
    }
    \label{tab:degrees_of_satisfaction}
\end{table*}
\end{example}

\end{document}